\newtheorem{theorem}{\textbf{Theorem}}
\newtheorem{lemma}{\textbf{Lemma}}
\newtheorem{proposition}{\textbf{Proposition}}
\newtheorem{remark}{\textbf{Remark}}
\title{Collaborative target-tracking control using multiple autonomous fixed-wing UAVs with constant speeds}
\author{Zhiyong Sun \footnote{Assistant professor at Department of Electrical Engineering, Eindhoven University of Technology, Eindhoven, the Netherlands. Email:
{\tt\small sun.zhiyong.cn@gmail.com}}}
\affil{Department of Electrical Engineering, Eindhoven University of Technology,
 5600 MB Eindhoven, the Netherlands}
\author{H\'{e}ctor Garcia de Marina\footnote{Researcher in the Department of Computer Architecture and Automatic Control at the Faculty of Physics, Universidad Complutense de Madrid. Email: {\tt\small hgarciad@ucm.es}}}
\affil{Universidad Complutense de Madrid, 28040 Madrid, Spain}
\author{Brian D. O. Anderson\footnote{Emeritus professor at Australian National University, Canberra, Australia.}}
\affil{Research School of Electrical, Energy and Material Engineering, Australian National University, Canberra, ACT 2601, Australia, and   Data61-CSIRO, Canberra, ACT 2601, Australia}
\author{Changbin Yu\footnote{Professor at Curtin University, Perth, WA, Australia. }}
\affil{Optus-Curtin Centre of Excellence in Artificial Intelligence, Curtin University, Perth, WA, Australia}
\begin{document}

\maketitle

\begin{abstract}
This  paper considers a collaborative tracking control problem using a group of fixed-wing unmanned aerial vehicles (UAVs) with constant and non-identical speeds. The dynamics of fixed-wing UAVs are modelled by unicycle-type equations with nonholonomic constraints,   assuming that UAVs fly at constant altitudes in the nominal operation mode. The controller is designed such that all fixed-wing UAVs as a group can collaboratively track a desired target's position and velocity. We first present conditions on the relative speeds of tracking UAVs and the target to ensure that the tracking objective can be achieved when UAVs are subject to constant speed constraints. We construct a reference velocity that includes both the target's velocity and position as feedback, which is to be tracked by the group centroid. In this way, all vehicles' headings are controlled such that the group centroid follows a reference  trajectory that successfully tracks the target's trajectory. A spacing controller is further devised to ensure that all vehicles stay close to the group centroid trajectory. Trade-offs in the controller design and performance limitations of the target tracking control due to the constant-speed constraint are also discussed in detail. Experimental results with three fixed-wing UAVs tracking a target rotorcraft are provided.
\end{abstract}

%\section*{Nomenclature}

%\noindent(Nomenclature entries should have the units identified)

%{\renewcommand\arraystretch{1.0}
%\noindent\begin{longtable*}{@{}l @{\quad=\quad} l@{}}
%$A$  & amplitude of oscillation \\
%$a$ &    cylinder diameter \\
%$C_p$& pressure coefficient \\
%$Cx$ & force coefficient in the \textit{x} direction \\
%$Cy$ & force coefficient in the \textit{y} direction \\
%c   & chord \\
%d$t$ & time step \\
%$Fx$ & $X$ component of the resultant pressure force acting on the vehicle \\
%$Fy$ & $Y$ component of the resultant pressure force acting on the vehicle \\
%$f, g$   & generic functions \\
%$h$  & height \\
%$i$  & time index during navigation \\
%$j$  & waypoint index \\
%$K$  & trailing-edge (TE) nondimensional angular deflection rate\\
%$\Theta$ & boundary-layer momentum thickness\\
%$\rho$ & density\\
%\multicolumn{2}{@{}l}{Subscripts}\\
%cg & center of gravity\\
%$G$ & generator body\\
%iso	& waypoint index
%\end{longtable*}}

\section{Introduction}
\subsection{Background and motivation}
Large-scale operations involving search and rescue, disaster response, environmental monitoring and sports coverage are envisioned to be more cost effective by making full use of networked multi-vehicle systems. One of the most active and important challenges in multi-vehicle systems is the control and coordination of a group of Unmanned Aerial Vehicles (UAVs) \cite{manathara2011multiple,anderson2008uav}, in particular, fixed-wing aircraft. A particular constraint complicating cooperative control design arises from constraints on their airspeeds.
%that the fixed-wing aircraft usually fly at some designated nominal (constant or almost constant)  speeds. 
In practice, airspeeds for fixed-wing UAVs should lie in a bounded interval: a lower bound for  the UAV speed that guarantees  they will not stall, and an upper bound arising from actuator constraints. In fact, small  fixed-wing aircraft typically fly optimally at \textit{constant} airspeeds, which are usually designated as nominal values designed for an optimal operation mode. For example, a constant speed might be given due to the optimization of the lift/drag ratio and as a consequence of having the vehicle's motor working in a nominal state with a fixed-pitch propeller. Furthermore, there often exists an optimal airspeed that is   most aerodynamically efficient for a given airframe of a fixed-wing UAV \cite{beard2012small}. Such a speed constraint imposes additional challenges for coordination control of multiple fixed-wing UAVs. 
% Actually, such a control problem has been posted to us  by the Australian Defence Science and Technology Group in their practical use of such UAVs. 

%This paper is a continuation of Part I \cite{sun2015CDC_formation} on collective control of multiple unicycle vehicles with constant but different speeds. In Part I  we have discussed several different frameworks for stabilizing  formations, while in this part II we will focus on the tracking control problem for such vehicle groups with  speed constraints.
Tracking control of stationary or moving targets by multiple fixed-wing UAVs has been a benchmark control problem in the field of multi-vehicle coordination control, which has found numerous applications in practice including target localization, surveillance, and target orbiting \cite{kang2009linear,   quintero2010optimal,   regina2011uav, ren2004trajectory, regina2013surface, zhang2018multiple, anjaly2019target, he2019trajectory, shaferman2008unmanned}. The footprints in Fig.~\ref{fig: tour} show four manned aircraft and helicopters tracking cyclists in the Tour de France 2018. In many stages the cyclists were split in many different groups, making it almost impossible to track all of them at the same time with only four vehicles. The usage of coordinated UAVs might solve this problem where efficient aircraft must fly at their nominal air-speeds to cover a cycling tour stage that might last several hours.  Fig.~\ref{fig:US_UAV_test} shows  two other typical scenarios involving multi-UAVs and target tracking, performed by the US military in a Perdix UAV swarm demonstration\footnote{Images in Fig.~\ref{fig:US_UAV_test} are captured from the video released in https://www.dvidshub.net/video/504622/perdix-swarm-demo-oct-2016. Also see the news report http://www.bbc.com/news/technology-38569027, dated on 10 January 2017.}. The demonstration employed almost 50 UAVs with adjustable cruising speeds and heading velocity, in an apparently centralized control framework. 

These examples motivate us to study and explore new approaches for the coordination control of a team of autonomous unmanned fixed-wing aircraft to perform a collaborative target-tracking task.  By assuming that each vehicle flies at a constant altitude in a nominal operation mode, the dynamics of fixed-wing UAVs can be modelled by 2-D differential equations with nonholonomic motion constraints and constant speeds. This paper focuses on the design of  feasible  target-tracking controllers for multiple \textit{autonomous} fixed-wing UAVs with motion and speed constraints to cooperatively track a moving target. 
%Different to the control scenarios in Fig.~\ref{fig: tour} and Fig.~\ref{fig:US_UAV_test}, we will aim to develop decentralized and autonomous tracking controllers }

\begin{figure}
  \centering
  \includegraphics[width=0.6\columnwidth]{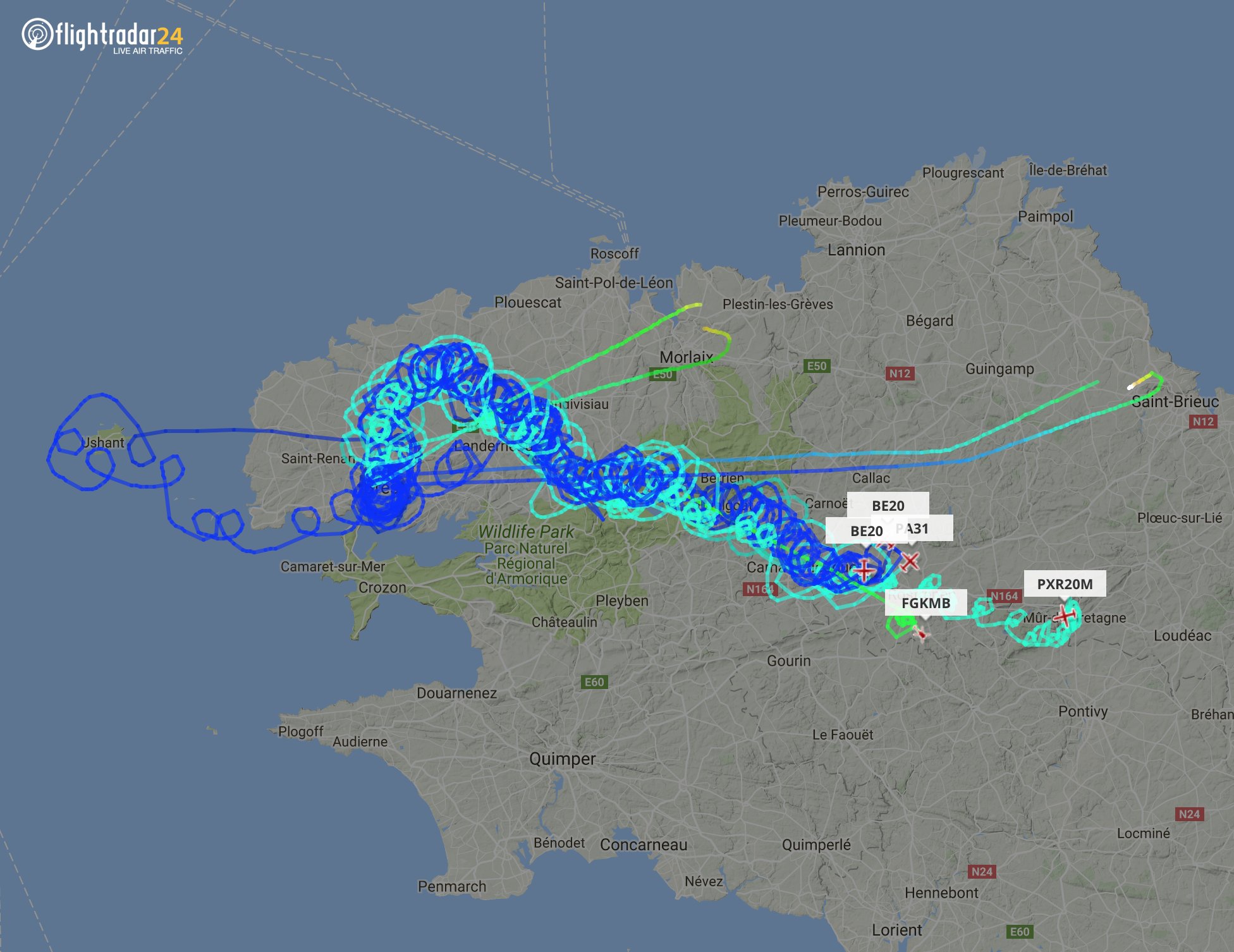}
  \caption{Footprints of the aircraft and helicopters tracking and covering the 6th stage of Tour de France 2018. Courtesy of www.flightradar24.com.}
  \label{fig: tour}
\end{figure}

\begin{figure}
  \centering
  \includegraphics[width=0.7\columnwidth]{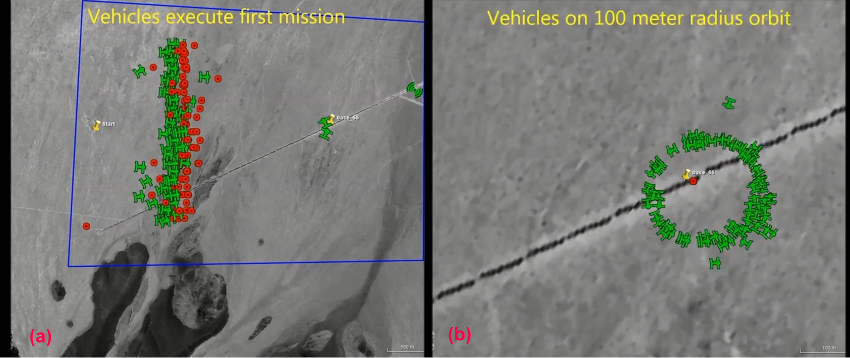}
  \caption{Typical tracking task involving multiple UAVs in a swarm. (a) Target tracking; (b) target orbiting. The images are taken from the Perdix micro-UAV swarm demonstration, released by the US Office of the Secretary of Defense Public Affairs in a public website www.dvidshub.net.    }
  \label{fig:US_UAV_test}
\end{figure}

\subsection{Related papers}

The above-mentioned coordination problems become much more challenging if all UAVs in the group have speed constraints. In fact, a more realistic model  than single- or double-integrators that can describe the nonholonomic constraints of such fixed-wing UAV dynamics is the unicycle model. Early contributions on coordination control of unicycle-type vehicles include consensus-based formation control \cite{lin2005necessary}, pursuit formation design \cite{marshall2006pursuit}, and rendezvous control \cite{dimarogonas2007rendezvous}. Other recent papers include different control constraints \cite{mastellone2008formation, quintero2010optimal,   regina2011uav,  regina2012novel, liu2013distributed, fidan2013single, Arranz2014TAC} to name just a few, but all assume that both the cruising speed and heading angular speed of individual vehicles are adjustable or controllable. For example, collaborative target-tracking guidance with fixed-wing UAVs was discussed in e.g., \cite{kang2009linear,   quintero2010optimal,   regina2011uav,  regina2012novel} via several strategies such as \textit{model predictive control} or \textit{dynamic programing}.  The tracking control of multiple unicycles was considered in \cite{Arranz2014TAC, arranz2011cooperative}, in which a  group of unicycles were tasked to track the trajectory of a  target with a time-varying velocity and the framework of \emph{circular motion} control proposed  in \cite{sepulchre2007stabilization} was employed. 
%The results developed in these papers are under the assumption that both speed and orientation of individual vehicles are adjustable. 
A more recent paper \cite{brinon2017target} relaxed measurement requirements on target information, but still assumed a unicycle-type  model with control inputs relating to  both cruising speed and angular speed.  In this paper, we consider the more challenging tracking control problem when only the orientation can be controlled and the speeds of all the vehicles remain fixed, so as to reflect the consideration of speed constraints in certain types of real fixed-wing UAVs \cite{anderson2008uav,regina2013surface, kang2009linear}. 

When a group of  constant-speed vehicles are involved in the control task of trajectory tracking, the problem becomes even more challenging. Two fundamental tracking problems (on tracking a straight line trajectory or a circular trajectory without the consideration of velocity matching) were discussed in \cite{moshtagh2007distributed,sepulchre2007stabilization, sepulchre2008stabilization, klein2006controlled, justh2004equilibria},  which assumed 2-D unicycle-type UAV models with \textit{unit} speeds.  
%Some other recent results on circular formation stabilization and motion coordination also mentioned in \cite{sepulchre2008stabilization,  moshtagh2009vision, anderson2018collective}. 
In \cite{sepulchre2007stabilization, sepulchre2008stabilization} the authors showed how to control a group of \emph{unit-speed} unicycles to achieve two behavior primitives (viz.   circular motion and   translational  motion), with switching between circling and aligned translation control. The papers \cite{anderson2008uav, van2008non} investigated non-hierarchical formation control of a group of fixed-wing UAVs with speed constraints in performing surveillance tasks, while the control law is based on a switching control that regulates the center of mass to follow a nominated (spiral) trajectory.  The proposed technique in \cite{klein2006controlled} showed a two-step design approach for designing tracking controllers that allows  the formation centroid of a group of \emph{unit-speed} unicycles to track a moving target. Such strategies were further explored in \cite{moshtagh2009vision} for vision-based flocking control of multiple autonomous vehicles. 

In this paper, we further consider a more realistic scenario where constant speeds in a multi-UAV group can be  \textit{non-identical} (but may be similar in terms of their nominal values). This is motivated by practical tracking scenarios that a multi-vehicle group may consist of multiple \textit{heterogeneous} UAVs with different functions or payloads, which will help to perform a comprehensive target-tracking task. 
Recent efforts towards the coordination of fixed-wing aircraft with \textit{non-identical constant} speeds were presented in \cite{Seyboth2014,2017Sun_circular_formation,2017iroscircular}, which  demonstrated coordination algorithms based on circular motions, rigid formations and distributed consensus-based flying coordination in practice. In particular, the recent paper \cite{2017Sun_circular_formation} explores the possibility of circular formation control for a group of autonomous fixed-wing UAVs under constant speed constraints, and develops several control strategies applicable for autonomous UAV systems such as position-/displacement-based approach and distance-based approach. A general theory and experimental verification of target-tracking control with multiple fixed-wing UAVs with non-identical constant speeds is however lacking in the literature.

\subsection{Contributions and paper organization}
In this paper, we aim to provide a systematic method to solve the target-tracking control problem under the constraints that (a) a target with a general trajectory curve is to be tracked, and (b) different vehicles have non-identical constant speeds. The framework for designing tracking controllers in this paper is motivated in part by \cite{sepulchre2008stabilization, klein2006controlled}, but several significant extensions and novel designs are required to deal with heterogeneous vehicles with non-identical speeds to achieve a collaborative control task of tracking a general target trajectory. The controller design consists of two parts: reference velocity tracking that aims to regulate the group centroid to track a target's velocity and position, and a spacing control that ensures all vehicles stay close to the group centroid. This two-step controller design was suggested in \cite{klein2006controlled} for unit-speed unicycle agents, whereas in this paper we further develop an implementable set of tracking controllers for fixed-wing UAVs (modelled by unicycle-type equations) with constant but non-identical speeds. Furthermore, a rigorous analysis on the stability of different equilibria and the convergence of velocity tracking dynamics is presented in detail. Due to the coordination constraints arising from constant speed, trade-offs in the controller design are inevitable and we also provide a detailed analysis of the performance limitations of using multiple fixed-wing UAVs in a collaborative tracking task. We also remark that our recent paper \cite{2017Sun_circular_formation} on fixed-wing UAV coordination control focuses on circular formation control, which is a different control objective to the target tracking control discussed in this paper.
% Nevertheless, the system dynamics model for fixed-wing UAVs described by unicycle-type equations discussed in \cite{2017Sun_circular_formation} show promising insights to control multi-UAV systems with constant-speed constraints in achieving a target-tracking objective. }
%Some algorithms are from \cite{klein2008coordinated} but we will show how to extend the results for multi-unicycle vehicle case with non-identical constant speeds. (which is a challenging task). Another method was introduced in \cite{kingston2007uav}, but the method does not guarantee the invariant centroid. This method has its root in phase-coupled oscillators \cite{kuramoto2003chemical}.

%\begin{figure}
%\begin{subfigure}{.4\textwidth}
%  \centering
%  \includegraphics[width=.4\linewidth]{US_UAV_test_1_new.png}
%  \caption{1a}
%  \label{fig:sfig1}
%\end{subfigure}%
%\begin{subfigure}{.4\textwidth}
%  \centering
%  \includegraphics[width=.4\linewidth]{US_UAV_test_1_new.png}
%  \caption{1b}
%  \label{fig:sfig2}
%\end{subfigure}
%\caption{plots of....}
%\label{fig:fig}
%\end{figure}

%{\color{blue} Clarifications with the conference paper \cite{sun2015collective}. New proofs and simulations, and experiment sections. Also mention the paper \cite{2017Sun_circular_formation}. . And the recent paper \cite{brinon2017target}}

A preliminary conference version of this paper was presented in \cite{sun2015collective}. Compared with \cite{sun2015collective}, this paper presents substantial extensions and new technical contributions, which include detailed proofs for all key results that were omitted in \cite{sun2015collective},  major additions on the  convergence analysis of velocity tracking control with mobile targets, conditions on UAVs' constant speeds for feasible reference velocity tracking, and  comprehensive discussions with novel insights on the trade-offs and tracking performance limitations of using constant-speed UAVs in the target tracking task.
Furthermore, a new section devoted to experimental verifications is also included in this extended version to demonstrate the real-life satisfactory performance of the tracking controller and to facilitate the reproduction of UAV experiments for the general community. Thus, for the first time, this paper presents both theoretical solutions and experimental results on feasible tracking controllers for fixed-wing UAVs in autonomous target tracking tasks, and the findings will further advance real-life applications of fixed-wing UAVs, even under the strict conditions of constant cruising speed constraints. 
 
We organize this paper as follows. We introduce the UAV model and problem formulation in Section  \ref{sec:background}. Reference velocity tracking  that ensures the UAV group centroid successfully tracks a reference trajectory is discussed in Section \ref{sec:velocity_tracking}, which also shows dedicated proofs on the convergence analysis of velocity tracking control with mobile targets.  Section \ref{sec:spacing_tracking} proposes the design of a reference velocity and spacing controller  to ensure all vehicles move close to the group centroid. Experimental results with fixed-wing UAVs tracking a moving rotorcraft with the proposed collaborative  tracking controller are shown in Section \ref{sec:experiment}.   Finally, Section \ref{sec:conclusion} concludes this paper.

%{\color{blue}Cite a link on http://www.bbc.com/news/technology-38569027 to further motivate this problem. 	 \emph{US military tests swarm of mini-drones launched from jets}  }
%
%{\color{blue} recent news in China on deploy of 100 drones in a formation, try to get images and reports that can be used in the paper}.

\section{Background, preliminary  and problem description} \label{sec:background}
\subsection{Notations} \label{sub:notations}
The notations used in this paper are fairly standard. 
%The set of real numbers is denoted by $\mathbb{R}$ and the set of complex numbers is denoted by $\mathbb{C}$. 
The set $\mathbb{S}^1$ denotes the unit
circle and an angle $\theta_i$ is a point  $\theta_i \in \mathbb{S}^1$. The $n$-torus is the Cartesian product $\mathbb{T}^n = \mathbb{S}^1 \times \cdots \times \mathbb{S}^1$. 
%The imaginary unit is $i : = \sqrt{-1}$. 
For a complex number $z \in \mathbb{C}$, $\text{Re}(z)$ and $\text{Im}(z)$  denote, respectively, its real part and imaginary part, and $\bar z$ is the complex conjugate of $z$.   For $z_1, z_2 \in \mathbb{C}^n$, the real scalar product is defined by $\langle z_1, z_2 \rangle = \text{Re}(\bar z_1^T z_2)$, i.e., the real part of the standard scalar product over $\mathbb{C}^n$.  The norm of $z \in \mathbb{C}^n$ is defined as $\|z\| = \langle z,z \rangle^{1 \over 2}$. For two complex numbers $z_k  = v_k e^{i \theta_k} \in \mathbb{C}$ and $z_j  = v_j e^{i \theta_j} \in \mathbb{C}$ represented in polar form, it holds
\begin{align} 
    \langle z_k, z_j \rangle &= \text{Re}(v_k e^{-i \theta_k} \cdot v_j e^{i \theta_j}) = \text{Re}(v_k v_j e^{i (\theta_j - \theta_k)} ) \nonumber \\
    &= \text{Re}(v_k v_j \text{cos} (\theta_j - \theta_k) + i v_k v_j \text{sin}(\theta_j - \theta_k)) \nonumber \\
    &= v_k v_j \text{cos} (\theta_j - \theta_k)
\end{align}
and similarly,
\begin{align}
    \langle z_k, i z_j \rangle &= \text{Re}(v_k e^{-i \theta_k} \cdot i v_j e^{i \theta_j}) = \text{Re}(i v_k v_j e^{i (\theta_j - \theta_k)} ) \nonumber \\
    &= \text{Re}(i v_k v_j \text{cos} (\theta_j - \theta_k) - v_k v_j \text{sin}(\theta_j - \theta_k)) \nonumber \\
    &= v_k v_j \text{sin} (\theta_k - \theta_j)
\end{align}
Furthermore, the following equalities, whose simple proofs can be found in \cite{2017Sun_circular_formation}, 
will be used frequently later.
\begin{lemma} \label{lemma:basic_calculation}
For $a, b \in \mathbb{C}^n$,
\begin{align}
\left \langle a, b \langle a,b  \rangle \right \rangle  &= \left \langle b, a \langle a,b  \rangle \right \rangle = \left \langle a, b \langle b,a  \rangle \right \rangle = \langle a, b \rangle ^2  \label{eq:lemma1_eq1},
\end{align}
and
\begin{align}
- \langle i a, b \rangle &= \langle  a, i b \rangle  \label{eq:lemma1_eq2}.
\end{align}
\end{lemma}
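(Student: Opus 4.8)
The plan is to reduce both identities to two elementary properties of the real scalar product $\langle\cdot,\cdot\rangle$ on $\mathbb{C}^n$ that are immediate from its definition $\langle z_1,z_2\rangle = \text{Re}(\bar z_1^T z_2)$: first, that it is \emph{symmetric}, $\langle a,b\rangle = \langle b,a\rangle$, which holds because $\bar b^T a = \overline{\bar a^T b}$ and a complex number and its conjugate share the same real part; second, that it is \emph{$\mathbb{R}$-homogeneous}, $\langle a,\lambda b\rangle = \lambda\langle a,b\rangle = \langle\lambda a,b\rangle$ for every real $\lambda$, which holds because a real factor passes through $\text{Re}(\cdot)$. As a by-product this records the fact --- on which the first identity rests --- that $\langle a,b\rangle$ is itself a real number, hence an admissible choice of scalar $\lambda$.

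To obtain \eqref{eq:lemma1_eq1}, I would set $\lambda := \langle a,b\rangle\in\mathbb{R}$ and apply homogeneity three times: $\langle a, b\langle a,b\rangle\rangle = \lambda\langle a,b\rangle = \langle a,b\rangle^2$; $\langle b, a\langle a,b\rangle\rangle = \lambda\langle b,a\rangle = \langle a,b\rangle^2$ after using symmetry; and $\langle a, b\langle b,a\rangle\rangle = \langle a,b\rangle^2$ after first rewriting $\langle b,a\rangle = \langle a,b\rangle$. So all three expressions collapse to $\langle a,b\rangle^2$.

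To obtain \eqref{eq:lemma1_eq2}, I would compute directly with real and imaginary parts: writing $\bar a^T b = x + iy$ with $x,y\in\mathbb{R}$, we get $\langle ia,b\rangle = \text{Re}(\overline{ia}^T b) = \text{Re}(-i\,\bar a^T b) = \text{Re}(y - ix) = y$, while $\langle a,ib\rangle = \text{Re}(i\,\bar a^T b) = \text{Re}(ix - y) = -y$, so that $-\langle ia,b\rangle = \langle a,ib\rangle$. (Alternatively, one can first note that multiplication by $i$ is an isometry of $\langle\cdot,\cdot\rangle$, since $\langle ia,ib\rangle = \text{Re}(\overline{ia}^T(ib)) = \text{Re}(\bar a^T b) = \langle a,b\rangle$, and apply this to the pair $(ia,b)$.)

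There is no real obstacle here: the content of the lemma is entirely bookkeeping, and the only point that needs care is to state at the outset that $\langle a,b\rangle$ is real and that $\langle\cdot,\cdot\rangle$ is a genuine real inner product (symmetric and $\mathbb{R}$-homogeneous). Once that is in place, \eqref{eq:lemma1_eq1} is just ``pull the real scalar through $\text{Re}(\cdot)$'' and \eqref{eq:lemma1_eq2} is a one-line computation; consistently with the fact that the paper merely cites these equalities from \cite{2017Sun_circular_formation}, no further machinery is needed.
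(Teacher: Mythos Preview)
Your proof is correct. The paper does not give its own proof of this lemma but simply refers the reader to \cite{2017Sun_circular_formation}; your argument is exactly the elementary verification one expects there, so there is nothing to compare.
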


% In the remaining parts of this paper we will frequently use the following equalities. 
% \begin{lemma}
% For $a, b \in \mathbb{C}^n$, there hold
% \begin{align}
% \left \langle a, b \langle a,b  \rangle \right \rangle  &= \left \langle b, a \langle a,b  \rangle \right \rangle = \left \langle a, b \langle b,a  \rangle \right \rangle = \langle a, b \rangle ^2  \label{eq:lemma1_eq1},
% \end{align}
% and
% \begin{align}
% - \langle i a, b \rangle &= \langle  a, i b \rangle  \label{eq:lemma1_eq2}.
% \end{align}
% \end{lemma}
% The proof can be found in \cite{2017Sun_circular_formation}.
 \subsection{Vehicle models}

In this paper, we consider a group of $n$ fixed-wing vehicles modelled by unicycle-type kinematics subject to a nonholonomic constraint and constant-speed constraint. By following 
\cite{regina2013surface,regina2012novel,kang2009linear}, the kinematic equations of fixed-wing vehicle $k$ flying in a fixed horizontal plane are described by
\begin{align} \label{eq:vehicle_model_1}
\dot x_k &= v_k \,\, \text{cos}(\theta_k) \nonumber \\
	\dot y_k &= v_k \,\, \text{sin}(\theta_k) \\
\dot \theta_k &=   u_k  \nonumber
\end{align}
where $x_k \in \mathbb{R}, y_k  \in \mathbb{R}$ are the coordinates of vehicle $k$ in the real horizontal plane and $\theta_k$ is the heading angle. The fixed-wing UAVs have  fixed cruising speeds $v_k>0$, which in general are distinct for different vehicles;  $u_k$ is the control input to be designed for steering the orientation of vehicle $k$. The equation \eqref{eq:vehicle_model_1} serves as a high-level kinematic model that captures motion constraints and vehicle dynamics  for fixed-wing UAVs flying at \textit{trim} conditions (e.g., at constant altitudes or fixed level flight) \cite{beard2012small}. 
In fact, the model (\ref{eq:vehicle_model_1}) fits fairly well into the dynamics of a small fixed-wing aircraft as we have shown in \cite{2017iroscircular,2017Sun_circular_formation}. When an aircraft flies at its nominal airspeed, both its lift and weight are balanced so that there is no  change in the vehicle's altitude. Therefore, the 3D dynamics of the aircraft can be easily decoupled to separate the planar motion parallel to the ground, i.e., the dynamics (\ref{eq:vehicle_model_1}), and the vertical motion. Consequently, we will assume that the aircraft fly at fixed (but different) altitudes in this paper, which is common in practice for nominal operation of multi-UAV groups. 
Note that when the airspeed is much higher than the windspeed we can consider   $v_k$ as the ground speed \cite{beard2012small}. 
%As we will see during the experiments, this is a mild assumption that does not have a substantial impact on the performance of the proposed algorithm for the coordination of the aircraft. 

For the convenience of analysis we also rewrite in complex notation the model (\ref{eq:vehicle_model_1}) for vehicle $k$ as
\begin{subequations}  \label{eq:vehicle_model}
\begin{align}
\dot r_k &= v_k e^{i \theta_k}    \\
\dot \theta_k &= u_k    \label{eq:vehicle_model_theta}
\end{align}
\end{subequations}
where the vector $r_k(t) = x_k(t) + i y_k(t)  : = \|r_k\| e^{i \phi_k(t)} \in \mathbb{C}$  denotes vehicle $k$'s position in the complex plane (where $\|r_k\| := \sqrt{x_k^2 + y_k^2}$ and $\phi_k := \text{arg}(r_k)$). We also define the vectors $r = [r_1, r_2, \cdots, r_n]^T \in \mathbb{C}^n$,   $\theta = [\theta_1, \theta_2, \cdots, \theta_n]^T \in \mathbb{T}^n$ and $e^{i\theta} = [e^{i\theta_1}, e^{i\theta_2}, \cdots, e^{i\theta_n}]^T \in \mathbb{C}^n$ to  represent the stacked vectors of the positions,  headings, and orientations of all the vehicles, respectively.

\subsection{Problem formulation}
\label{sec: backpro}

We consider the problem of tracking a target trajectory $r_{\text{target}}(t)$ by a team of fixed-wing UAVs with (possibly non-identical) constant speeds. A reasonable strategy in the multi-vehicle tracking control is to use the centroid of the vehicle team, denoted by $\hat r(t) = \frac{1}{n} \sum_{k=1}^n r_k(t)$, as the \textit{surrogate}  position of the whole vehicle group, which is to be controlled to match the target position $r_{\text{target}}(t)$. In fact, we will generate a velocity reference signal $\dot r_{\text{ref}}(t)$ to be tracked by the centroid in a manner previously introduced in \cite{klein2006controlled}. As will be seen later in Section~\ref{sec:spacing_tracking}, this velocity reference signal is not necessarily identical to the target's velocity (unless initially the reference point is collocated with the target's position). The construction of the reference velocity takes into account both the target's velocity and the relative position between the target and the group centroid.    The control strategy is split into two loops similarly to \cite{xargay2012time}. In a first phase discussed in Section \ref{sec:velocity_tracking}, an inner loop for each member of the UAV team controls their orientations so that the velocity of the centroid tracks $\dot r_{\text{ref}}(t)$. In a second phase discussed in Section \ref{sec:spacing_tracking}, an outer loop generates $\dot r_{\text{ref}}(t)$ using information on the target's position.

%The collective control of constant unit-speed vehicles can be cast in a  similar problem   framework as that of  phase control of coupled oscillators. An important measure for such an oscillator network is the  so-called \emph{order parameter} $p_{\theta}$, defined by  $p_{\theta} = \frac{1}{n} \sum_{k=1}^n  e^{i\theta_k}$, which is actually the centroid velocity of a  group of unit-speed vehicles and is often used to measure the phase coherence or phase synchronization level \cite{sepulchre2007stabilization, dorfler2014synchronization}. In the control of networked vehicles with non-identical speeds, such a measure will not be suitable  without modification. We will use a different metric termed  \emph{average linear momentum} \cite{Seyboth2014}, defined as $\dot {\hat r} := \frac{1}{n} \sum_{k=1}^n v_k e^{i \theta_k}$, to measure the centroid quantity, which is actually the velocity of the group centroid  point according to the definition of $\hat r$. This quantity was also used  in \cite{Seyboth2014} for \emph{collective circular control} of heterogeneous unicycles with non-identical cruising speeds. Thus, the current tracking control can also be seen  as a type of extension of the circular motion stabilization considered in \cite{Seyboth2014}.

We will see that our proposed controller for coordinating constant-speed UAVs resembles the phase control problem of coupled oscillators studied in \cite{sepulchre2008stabilization,Seyboth2014}. This is due to the fact that we will control  angular velocities of the vehicles to regulate their heading orientations. An important measure for such an oscillator network is the so-called \emph{order parameter} $p_{\theta} = \frac{1}{n} \sum_{k=1}^n  e^{i\theta_k}$, which is actually the centroid velocity of a  group of unit-speed vehicles and is often used to measure the phase coherence or phase synchronization level \cite{sepulchre2007stabilization, dorfler2014synchronization}. In this paper, we will use a similar metric called \emph{average linear momentum} $\dot {\hat r} := \frac{1}{n} \sum_{k=1}^n v_k e^{i \theta_k}$  \cite{Seyboth2014}, which is actually the velocity of the group centroid point according to the definition of $\hat r$. The authors in \cite{Seyboth2014} employ this metric for the control of circular motions of unicycles with different constant speeds, and one of our proposed controllers can be seen as an extension of that work.

Apart from the velocity reference tracking control by the group centroid, we also need to provide an extra control to steer each vehicle to stay close to the group centroid. For example, if we consider a tracking task where the position of the target is stationary, one may encounter situations where some or all of the tracking vehicles travel away from the origin while their centroid remains constant at the target, a situation that is not acceptable (except in the short term). We will therefore introduce an additional term to the tracking controller to regulate \emph{the planar spacing} among the vehicles as a coherent tracking team. Consequently, all the vehicles will remain at a bounded distance from their centroid. In this work, we will not consider additional terms for collision avoidance. In particular, since one of our goals is to demonstrate the algorithm on fixed-wing aircraft, we can always suppose that they are flying at different altitudes, which  is a common assumption in the literature  \cite{klein2006controlled, Seyboth2014}. This should not be a strong restriction, in particular  if we are not aiming at really massive swarms. In summary, the overall tracking controller will be in the form
\begin{align} \label{eq:controller_combined}
u_k = u_k^{\text{velocity}} + u_k^{\text{spacing}}
\end{align}
where $u_k^{\text{velocity}}$ is the responsible term to track the reference velocity (which is constructed by feedback from a target's position and velocity) via the group centroid and the term $u_k^{\text{spacing}}$ controls  \emph{the spacing} for each individual vehicle so that they can remain with a bounded distance to the centroid. The two-step controller design method \eqref{eq:controller_combined} was also suggested in \cite{klein2006controlled} for tracking control of unit-speed unicycles, while the development of tracking controllers for heterogeneous unicycle-type agents with constant but non-identical speeds requires substantial improvements and dedicated proofs on the stability and convergence property (provided in Section~\ref{sec:velocity_tracking}). Furthermore, it is obvious that there exist certain trade-offs in the design of these two controllers since, generally speaking,  the two sub-tasks (i.e., reference velocity tracking and spacing control) are not likely to be achieved perfectly at the same time. Performance limitations arising from the trade-offs in the controller design will be discussed in more detail in Section \ref{sec:velocity_tracking} and Section \ref{sec:spacing_tracking}.

% . We will discuss in detail how to design   controllers for the above two  sub-tasks in Section III and Section IV, respectively.

%Discussions on the quantity of different measures; such as the linear momentum \cite{Seyboth2014} and the order parameter which is used to measure the phase synchronization and phase coherence, which has been used in several papers. We will use the average position and its velocity as an important metric for the formation tracking control.

\section{Controller design phase I: Reference velocity tracking}  \label{sec:velocity_tracking}
In this section, we discuss Phase I in the controller design, i.e., how to regulate each vehicle's heading and motion with the dynamics $\dot \theta_k = u_k^{\text{velocity}}$ so that the velocity of the vehicle group's centroid achieves a desired reference velocity.  The construction of a reference velocity, the combined controller design that stabilizes the spatial error to the target's position and the spacings between individual vehicles  will be discussed in the next section.

\subsection{Conditions on constant speeds for a feasible reference velocity tracking}
%In this section we will consider the sub-problem of designing $u_k^{\text{velocity}}$ so that the group centroid can be regulated to achieve a desired reference velocity by controlling the headings of  individual vehicles. Here we suppose $\dot \theta_k = u_k^{\text{velocity}}$, while the combined controller design (i.e. where the spatial error to the target's position and the spacings between vehicles are also controlled) will be discussed in  next section.

%{\color{red}A short comment: this case holds true if initially the reference trajectory coincides with the target trajectory, and therefore...  }

%%But we do need the following assumption for a successful tracking. The reference velocity is described by a unicycle-type vehicle; this assumption is reasonable since in many cases the target to be tracked is some ground vehicle which can be modelled by unicycle constraints.
%
%Denote the centroid of all the vehicles $\hat r = \frac{1}{n} \sum_{k=1}^n r_k$. Then one has $$\dot {\hat r} = \frac{1}{n} \sum_{k=1}^n v_k e^{i \theta_k}$$ which is termed \emph{linear momentum} in \cite{Seyboth2014}.
Denote the reference velocity by $\dot r_{\text{ref}}(t) =  v_{\text{ref}}(t) e^{i \theta_{\text{ref}}(t)}$, where $v_{\text{ref}}(t)$ and $\theta_{\text{ref}}(t)$ are the  (possibly time-varying) airspeed and heading direction of the reference, respectively. If initially the reference trajectory coincides with the target trajectory, the target velocity is used as the reference velocity. Otherwise, the construction of the reference velocity should take into account both the target's position and velocity (which will be elaborated in detail in Section \ref{sec:spacing_tracking}). For a group of constant-speed vehicles, one cannot expect that an arbitrary reference velocity can be tracked by the group centroid. In this section, we will give several conditions that guarantee a feasible reference velocity tracking. 

%The design aims to find a control law such that the velocity of the centroid can achieve  the desired reference velocity.
 %, which will be written in a unicycle equation.
Recall the definitions of the group centroid position $\hat r$ and velocity $\dot {\hat r}$ defined in Section \ref{sec: backpro}, whose values depend on the simultaneous headings and velocities of each individual vehicle. The maximum value of the group centroid speed, denoted as $\bar v_{\text{max}}$, can be achieved when  \emph{all the vehicles reach heading synchronization}, in which case there holds $\bar v_{\text{max}} :=\|\dot {\hat r}_{\text{max}} \|= \frac{1}{n} \sum_{k=1}^n  v_k$. However, due to the non-identical constant speeds in the group, even if the maximum group centroid speed $\bar v_{\text{max}} = \frac{1}{n} \sum_{k=1}^n  v_k$ can be achieved, the inter-vehicle distances between individual vehicle will grow larger and larger without bound because of the non-zero differences between individual speeds. Therefore, a strict condition on the individual speeds should be imposed.  

For any vehicle in the group, the minimum speed, denoted by $v_{\text{min}} : =\text{min}_{k \in \{1,2,\cdots,n\}} v_k$ should be greater than the reference speed $\|\dot { r}_{\text{ref}}(t)\|$. Otherwise, the distance between the vehicle with the smallest speed and the target will grow unboundedly and a collective target tracking cannot be achieved. Therefore, one should ensure that 
the reference speed is smaller than the minimum speed in the vehicle group (which trivially ensures that the reference speed is smaller than any constant speed of the vehicle group). Of course, the reference speed cannot exceed the maximum speed of the group centroid $\frac{1}{n} \sum_{k=1}^n  v_k$. 

Now consider the task that the group centroid is regulated to achieve a zero centroid velocity, which is not possible if there exists one vehicle in the group whose constant speed is larger than the sum of all other vehicles' speeds. In order to ensure that the group centroid can reach a zero reference speed, one should exclude this case by imposing a further condition that  $v_{\text{max}} \leq \sum_{k=1}^n  v_k-v_{\text{max}}$. In summary, the necessary conditions for a feasible reference velocity tracking are shown in the following proposition.

\begin{proposition}  \label{ass:velocity_condition}
For a feasible reference velocity tracking, the constant speeds for the vehicle group should satisfy the following conditions: 
\begin{itemize}
\item All vehicles' speeds should be greater than the reference speed, i.e., $v_{\text{min}}\geq v_{\text{ref}}$;
\item There does not exist one vehicle in the group whose   speed is larger than the sum of the speeds of all other vehicles, i.e. there should hold $v_{\text{max}} \leq  \sum_{k=1}^n  v_k-v_{\text{max}}$.
\end{itemize}   
\end{proposition}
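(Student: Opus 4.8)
The plan is to prove each of the two bullet points separately, in both cases by contradiction, using the following reading of ``feasible reference velocity tracking'': the centroid velocity $\dot{\hat r}(t) = \frac{1}{n}\sum_{k=1}^n v_k e^{i\theta_k(t)}$ must asymptotically match the reference velocity $\dot r_{\text{ref}}(t)$, and -- since the task is \emph{collective} tracking -- every vehicle must in addition remain within a bounded distance of the centroid $\hat r(t)$ (equivalently, the inter-vehicle distances stay bounded). Throughout I regard $v_{\text{ref}}$ as the persistent reference speed, so that the class of reference signals the controller is required to handle includes, in particular, a straight line traversed at constant speed $v_{\text{ref}}$ and the stationary reference $\dot r_{\text{ref}}\equiv 0$.

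For the first condition, suppose to the contrary that some vehicle $j$ has $v_j = v_{\text{min}} < v_{\text{ref}}$, and take the reference to be a straight line at speed $v_{\text{ref}}$. Since $\|\dot r_j(t)\| = v_j$ for all $t$, the time-averaged velocity of vehicle $j$ over $[0,T]$ obeys $\big\| \tfrac{1}{T}(r_j(T)-r_j(0))\big\| = \big\| \tfrac{1}{T}\int_0^T \dot r_j\,dt \big\| \le v_j$. On the other hand, if vehicle $j$ stays within a bounded distance of the centroid and the centroid tracks the reference, then $\|r_j(T)-r_{\text{ref}}(T)\|$ is bounded, so the averages $\tfrac{1}{T}(r_j(T)-r_j(0))$ and $\tfrac{1}{T}(r_{\text{ref}}(T)-r_{\text{ref}}(0))$ differ by $O(1/T)$; the latter has magnitude exactly $v_{\text{ref}}$ for the straight-line reference. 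Letting $T\to\infty$ forces $v_j \ge v_{\text{ref}}$, a contradiction, so $v_{\text{min}} \ge v_{\text{ref}}$.

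For the second condition, suppose to the contrary that the vehicle $j$ attaining $v_j = v_{\text{max}}$ satisfies $v_{\text{max}} > \sum_{k\ne j} v_k = \sum_{k=1}^n v_k - v_{\text{max}}$, and consider the stationary reference $\dot r_{\text{ref}} \equiv 0$, which feasibility must accommodate. Tracking then requires $\dot{\hat r}(t) = \frac{1}{n}\sum_{k=1}^n v_k e^{i\theta_k(t)} \to 0$. But the reverse triangle inequality gives, for \emph{every} choice of headings, $\big\| \sum_{k=1}^n v_k e^{i\theta_k}\big\| \ge v_j - \big\| \sum_{k\ne j} v_k e^{i\theta_k}\big\| \ge v_j - \sum_{k\ne j} v_k > 0$, so $\|\dot{\hat r}(t)\|$ is bounded below by the fixed positive constant $\frac{1}{n}\big(v_{\text{max}} - \sum_{k\ne j} v_k\big)$, independent of $\theta$. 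This contradicts $\dot{\hat r}\to 0$, hence $v_{\text{max}} \le \sum_{k=1}^n v_k - v_{\text{max}}$.

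The routine part is the triangle-inequality bound underlying the second condition. The main obstacle -- the point that needs care -- is the first condition: a reference that curves or loops has time-averaged velocity of magnitude strictly below $v_{\text{ref}}$, so the informal claim ``the distance grows unboundedly'' is not literally true for every admissible reference of speed $v_{\text{ref}}$. The argument therefore has to be phrased for a worst-case persistently translating reference, as above, or, if one insists on a specific curving reference, by comparing traversed arc lengths over sliding time windows and arguing that a vehicle confined to a tube of bounded radius about the reference cannot have its arc length fall arbitrarily far behind that of the reference. I would present the clean averaging argument and simply remark that it already establishes necessity over the class of reference signals the two-loop controller is designed to handle.
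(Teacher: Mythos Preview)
Your argument is correct and follows the same underlying ideas as the paper: the paper justifies the first bullet by the observation that a vehicle slower than the reference cannot keep up (``the distance between the vehicle with the smallest speed and the target will grow unboundedly''), and the second bullet by noting that the centroid cannot attain zero velocity if one vehicle's speed exceeds the sum of the others'. The paper leaves both points at that informal level, presenting them as discussion leading up to the proposition rather than as a proof.

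What you add is rigor. For the second bullet your reverse-triangle-inequality bound is exactly the calculation that makes the paper's ``not possible'' precise. For the first bullet you are more careful than the paper: you correctly flag that ``distance grows unboundedly'' is not literally true for an arbitrary reference of speed $v_{\text{ref}}$ (a circling reference being the obvious counterexample), and you repair this by restricting to a persistently translating reference and using time-averaged velocities. That caveat is absent from the paper's discussion, so your version is in fact sharper on this point; otherwise the two treatments coincide.
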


In the following subsections, we consider velocity tracking control for the constant-speed vehicle group. We will start from the simple case of constant reference velocity and then extend the controller design result to the time-varying velocity case. We remark that the conditions in Proposition~\ref{ass:velocity_condition}  on feasible reference velocity tracking should be imposed  to interpret the main results of this section.
\subsection{Tracking a constant reference  velocity}
%We suppose that  different vehicle may have different speed (that is, $v_k >0$ may be different for different $k$). We denote the maximum speed as $v_{\text{max}} = \text{max}_{k=1,\cdots,n}  (v_k)$.
%
%
%
%We also need the following assumption for a successful tracking.
%\begin{assumption}
%The reference velocity $\dot r_{\text{ref}}$ is constant; the reference speed $v_{\text{ref}}$  is less than  $\dot {\hat r}_{\text{max}} = \frac{1}{n} \sum_{k=1}^n  v_k$.
%\end{assumption}
This subsection solves the control problem of regulating the formation centroid to track and match a constant reference  velocity $\dot r_{\text{ref}} =  v_{\text{ref}} e^{i \theta_{\text{ref}}}$, in which   both  $v_{\text{ref}}$ and  $ \theta_{\text{ref}}$ are constant. The controller involves collectively regulating the heading of each individual vehicle. 
This control problem can be seen as an extension of the result in \cite{sepulchre2007stabilization, moshtagh2007distributed} which discussed the control problem of regulating a group of \emph{unit-speed} vehicles to achieve flocking behavior (i.e. a translational motion along a fixed direction).  
%In particular, we extend the result of circular motion stabilization  discussed in \cite{Seyboth2014} to trajectory tracking control  for constant-speed vehicles.

%We denote the reference velocity as $\dot r_{\text{ref}}$, which can be written as
%\begin{align}
%\dot r_{\text{ref}} &= v_{\text{ref}} e^{i \theta_{\text{ref}}}  \nonumber \\
%\dot \theta_{\text{ref}} & = 0
%\end{align}

The first main result on constant velocity tracking is stated in the following theorem.

\begin{theorem} \label{theorem:constant}
Suppose that the  constant reference velocity $v_{\text{ref}}$ and all vehicles' constant speeds $v_k$ satisfy the conditions in Proposition~\ref{ass:velocity_condition}. Consider the following steering control law for \eqref{eq:vehicle_model}
\begin{align}\label{eq:controller_constant}
u_k^{\text{velocity}}  & =   -  \gamma \left \langle \dot {\hat r} - \dot r_{\text{ref}}, \frac{1}{n} i  v_k e^{i \theta_k} \right \rangle \nonumber \\
& =  -  \gamma \left \langle \frac{1}{n} \sum_{k=1}^n v_k e^{i \theta_k} - v_{\text{ref}} e^{i \theta_{\text{ref}}}, \frac{1}{n} i  v_k e^{i \theta_k} \right \rangle
\end{align}
where $\gamma$ is a positive control gain that can be used for adjusting turning rate. Suppose that  the initial headings of all the vehicles are not aligned with the phase of the reference velocity. Then the equilibrium point for the phase dynamics \eqref{eq:vehicle_model_theta} at which $\dot {\hat r} := \frac{1}{n} \sum_{k=1}^n v_k e^{i \theta_k} = v_{\text{ref}} e^{i \theta_{\text{ref}}}$ is asymptotically stable and all other equilibria are unstable. Furthermore, the control law \eqref{eq:controller_constant} almost globally stabilizes the group centroid velocity to the desired constant reference velocity $\dot r_{\text{ref}} =  v_{\text{ref}} e^{i \theta_{\text{ref}}}$.
\end{theorem}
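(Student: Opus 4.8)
The plan is to build a Lyapunov function from the centroid velocity error and to recognize the closed-loop heading dynamics as a gradient flow on the torus $\mathbb{T}^n$. Write $\psi(\theta) := \dot{\hat r} - \dot r_{\text{ref}} = \frac{1}{n}\sum_{k=1}^n v_k e^{i\theta_k} - v_{\text{ref}}e^{i\theta_{\text{ref}}}$ and take $V(\theta) = \frac12\|\psi(\theta)\|^2$. Since $\dot r_{\text{ref}}$ is constant, $\partial V/\partial\theta_k = \langle\psi,\ \frac1n i v_k e^{i\theta_k}\rangle$, so the steering law \eqref{eq:controller_constant} is exactly $u_k^{\text{velocity}} = -\gamma\,\partial V/\partial\theta_k$; that is, the closed loop \eqref{eq:vehicle_model_theta} is the negative gradient flow $\dot\theta = -\gamma\nabla_\theta V$. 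Differentiating $V$ along trajectories and using \eqref{eq:lemma1_eq1} gives $\dot V = -\gamma\sum_{k}\langle\psi,\frac1n i v_k e^{i\theta_k}\rangle^2 = -\frac1\gamma\sum_{k}(u_k^{\text{velocity}})^2 \le 0$. Since $\mathbb{T}^n$ is compact and the sublevel sets of $V$ are positively invariant, LaSalle's invariance principle yields convergence of every trajectory to the critical set of $V$, which coincides with the equilibria $\{\theta:\ u_k^{\text{velocity}}=0,\ k=1,\dots,n\}$.

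I would then classify the critical points. Using $\langle z_k, i z_j\rangle = v_k v_j\sin(\theta_k-\theta_j)$, the conditions $\langle\psi, i v_k e^{i\theta_k}\rangle=0$ for all $k$ leave two possibilities: either $\psi=0$, giving the set $\mathcal{E}_0 = \{\theta:\dot{\hat r}(\theta)=\dot r_{\text{ref}}\}$, which is nonempty under Proposition~\ref{ass:velocity_condition} (those conditions make $v_{\text{ref}}e^{i\theta_{\text{ref}}}$ an attainable centroid velocity) and is the global minimum set of $V\ge0$; or $\psi\ne0$ and every heading is collinear with $\arg\psi$, i.e.\ $\theta_k\in\{\arg\psi,\arg\psi+\pi\}$ for all $k$. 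In the second case $\dot{\hat r}$ is a real multiple of $e^{i\arg\psi}$, which forces $\theta_{\text{ref}}$ to be collinear with $\arg\psi$ too, so at every such ``undesired'' equilibrium all headings lie in $\{\theta_{\text{ref}},\theta_{\text{ref}}+\pi\}$. Since $V$ attains its minimum $0$ precisely on $\mathcal{E}_0$, $\dot V\le0$ with equality only at equilibria, and the attainable centroid speeds along collinear configurations form a finite set so that $\|\psi\|$ stays bounded away from $0$ on the undesired equilibria (whose closure is therefore disjoint from $\mathcal{E}_0$), LaSalle's theorem confines every trajectory starting in a small enough sublevel set of $V$ to $\mathcal{E}_0$, establishing its asymptotic stability as a set.

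The heart of the proof is to show that each undesired equilibrium $\theta^\star$ is unstable; for a gradient flow this reduces to showing that $V$ has no local minimum at $\theta^\star$, which I would establish via the Hessian $H$ of $V$ there. Writing $a_k = \pm v_k$ according to whether heading $k$ is aligned or anti-aligned with $\theta_{\text{ref}}$, and $c^\star = \|\psi(\theta^\star)\|>0$, one computes (using \eqref{eq:lemma1_eq2} and $\langle e^{i\theta_k^\star},e^{i\theta_l^\star}\rangle=\pm1$) that $H = \frac1{n^2}\,aa^T - \frac{c^\star}{n}\,\text{diag}(a)$. It then remains to exhibit a direction $\xi$ with $\xi^T H\xi<0$, which I would do by a short case analysis on the sign pattern: perturbing a single aligned vehicle $k$ gives $\xi^T H\xi = \frac{v_k}{n}\big(\frac{v_k}{n}-c^\star\big)$, negative when $c^\star > v_k/n$; perturbing two aligned vehicles in opposite directions always gives a negative value, handling the case of at least two aligned vehicles; and the remaining case — exactly one aligned vehicle with $c^\star\le v_k/n$ — is precisely where Proposition~\ref{ass:velocity_condition}'s ``no dominant vehicle'' condition $v_{\max}\le\sum_j v_j-v_{\max}$ must be invoked, through a perturbation that slides the centroid along $\pm e^{i\theta_{\text{ref}}}$ (and the ``all anti-aligned'' pattern is excluded outright by $v_{\text{ref}}\le v_{\min}$). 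Carrying out this case analysis cleanly, and tracking which of the two hypotheses of Proposition~\ref{ass:velocity_condition} is used where, is the main obstacle; the rest is routine.

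Finally, for almost-global convergence I would use that $V$ is real-analytic on the compact manifold $\mathbb{T}^n$: every trajectory of $\dot\theta=-\gamma\nabla_\theta V$ then converges to a single critical point, and since each undesired critical point fails to be a local minimum of $V$, the union of the stable manifolds of the undesired equilibria has zero Lebesgue measure (center-stable manifold theorem). Hence almost every initial condition yields $\theta(t)\to\mathcal{E}_0$, equivalently $\dot{\hat r}(t)\to\dot r_{\text{ref}} = v_{\text{ref}}e^{i\theta_{\text{ref}}}$; the hypothesis that the initial headings are not aligned with $\theta_{\text{ref}}$ excludes, in particular, starting at the heading-synchronized equilibrium aligned with $\theta_{\text{ref}}$, which is one of the undesired equilibria.
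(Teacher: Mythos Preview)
Your approach is essentially the paper's: the same Lyapunov function $V=\tfrac12\|\dot{\hat r}-\dot r_{\text{ref}}\|^2$, recognition of the closed loop as a gradient flow on $\mathbb{T}^n$, LaSalle to reach the critical set, and a Hessian analysis at the undesired equilibria. Your final paragraph on almost-global convergence (real-analyticity plus the stable-manifold argument) is more explicit than the paper, which simply asserts the almost-global conclusion.

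There is, however, a bookkeeping slip in your Hessian case analysis that would cause trouble if carried through as written. You parameterize the sign pattern by alignment with $\theta_{\text{ref}}$, but the correct reference direction is $\phi:=\arg\psi$, which at an undesired equilibrium can equal either $\theta_{\text{ref}}$ or $\theta_{\text{ref}}+\pi$. Your Hessian formula $H=\tfrac1{n^2}aa^T-\tfrac{c^\star}{n}\operatorname{diag}(a)$ is only correct in the first case; in the second, the sign of the diagonal term flips. This is exactly why your claim that the ``all anti-aligned with $\theta_{\text{ref}}$'' pattern is excluded by $v_{\text{ref}}\le v_{\min}$ fails: that configuration \emph{is} an equilibrium (indeed the global maximum of $V$, with $\phi=\theta_{\text{ref}}+\pi$), and it is not ruled out by the speed condition. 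The paper avoids this by defining the sign vector relative to $\phi$ from the start, obtaining $H_V=\tfrac1n vv^T+\|\dot{\tilde r}\|\operatorname{diag}(v)$ uniformly, and then splitting according to $m$, the number of vehicles with phase $\phi+\pi$: the extremes $m=0$ and $m=n$ are maxima, the case $1\le m<n-1$ is handled by your ``two opposite perturbations'' vector, and $m=n-1$ is where Proposition~\ref{ass:velocity_condition}'s no-dominant-vehicle condition enters. If you redo your case split relative to $\phi$ rather than $\theta_{\text{ref}}$, your argument lines up with the paper's and goes through.
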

We remark that the second term in \eqref{eq:controller_constant}, i.e., $i  v_k e^{i \theta_k}$, rotates the complex variable  $v_k e^{i \theta_k}$ by 90 degrees. The control input can also be written using real variables with trigonometric functions, by using the formulas introduced in Section~\ref{sub:notations}.  

\begin{proof}
We first show that the system $\dot \theta = u^{\text{velocity}}$ with the above designed controller \eqref{eq:controller_constant} describes a gradient flow for the following quadratic potential
\begin{align} \label{eq:V_function1}
V(\theta) & = \frac{1}{2}\|\dot {\hat r} - \dot r_{\text{ref}}\|^2 \nonumber \\
& = \frac{1}{2}  \left \langle \frac{1}{n} \sum_{k=1}^n v_k e^{i \theta_k} - v_{\text{ref}} e^{i \theta_{\text{ref}}}, \frac{1}{n} \sum_{k=1}^n v_k e^{i \theta_k} - v_{\text{ref}} e^{i \theta_{\text{ref}}} \right \rangle
\end{align}
The gradient of $V(\theta)$ can be calculated as
\begin{align}
\frac{\partial V(\theta)}{\partial \theta_k} & = \left \langle  \frac{\partial {\dot {\hat r}}}{\partial \theta_k},  \dot {\hat r} - \dot r_{\text{ref}} \right \rangle \nonumber \\
& = \left \langle \frac{1}{n} i  v_k e^{i \theta_k},  \dot {\hat r} - \dot r_{\text{ref}} \right \rangle  =  \left \langle  \dot {\hat r} - \dot r_{\text{ref}}, \frac{1}{n} i  v_k e^{i \theta_k} \right \rangle
\end{align}
Hence the phase system \eqref{eq:vehicle_model_theta} with the designed control law \eqref{eq:controller_constant} can be written as $\dot \theta = -  \gamma  \nabla V(\theta)$, which is a gradient descent flow for the potential function  $V(\theta)$. Furthermore, $V(\theta) \geq 0$ and $V(\theta)$ is zero if and only if $\dot {\hat r} = \dot r_{\text{ref}}$. Thus, $V(\theta)$ can be used as a Lyapunov function for the convergence analysis. Due to the gradient property of the phase system \eqref{eq:vehicle_model_theta},  there exists no limit cycle  under the control $\dot \theta = u^{\text{velocity}}$ at the steady state \cite{absil2006stable}.  Furthermore, the stability analysis of different equilibria \eqref{eq:vehicle_model} can be cast as a critical point analysis of the real analytic potential $V(\theta)$. Note that the system variable is $\theta \in \mathbb{T}^n$ where $\mathbb{T}^n$  is compact and thus  the sub-level sets of $V(\theta)$ are also compact according to its definition in \eqref{eq:V_function1}. We remark that for the phase dynamics \eqref{eq:vehicle_model_theta} with the velocity tracking controller \eqref{eq:controller_constant}, the state variable is $\theta$; the position variable $r$ is not involved. \footnote{This point will be made clear in Remark \ref{remark:real_variables}, in which the control input is equivalently written in real variables that only involve $\theta$.} 

The derivative of $V(\theta)$ along the trajectory of the phase system \eqref{eq:vehicle_model} can be computed as
\begin{align}
\dot V  & = \nabla V(\theta)^T \dot \theta  = - \frac{\gamma}{n} \sum_{k=1}^n \left \langle \dot  {\hat r} - \dot r_{\text{ref}}, i  v_k e^{i \theta_k} \right \rangle ^2 \leq 0
\end{align}
By LaSalle's Invariance Principle, all solutions of  \eqref{eq:vehicle_model} with the controller \eqref{eq:controller_constant} converge to the largest invariant set contained in
\begin{align} \label{eq:set_O}
\mathcal{O}(r, \theta) &= \{(r, \theta)| \dot V = 0\} \nonumber \\
& = \{(r, \theta) | \left \langle \dot  {\hat r} - \dot r_{\text{ref}}, i  v_k e^{i \theta_k} \right \rangle = 0, \,\, k = 1,2, \cdots, n \}
\end{align}
In the following we will show the properties of different sets of critical points. Note that the Jacobian of the right-hand side of \eqref{eq:vehicle_model} with the controller \eqref{eq:controller_constant} is $-\gamma H_V$ where $H_V$ is the Hessian of $V$.  The nature of an equilibrium (of being a   minimum, a saddle point or a   maximum) can be determined by the signs of the eigenvalues of the Hessian $H_V$ at that equilibrium assuming that the Hessian is non-singular. 

From the above analysis, it is clear that the desired critical points on which $\dot  {\hat r} : = \frac{1}{n} \sum_{k=1}^n v_k e^{i \theta_k} = v_{\text{ref}} e^{i \theta_{\text{ref}}} =: \dot r_{\text{ref}}$ are   global minima of $V(\theta)$, which are asymptotically stable. We will show other equilibrium sets that correspond to $V(\theta)>0$, or equivalently equilibrium points in the set $\mathcal{O}(r, \theta)$ in \eqref{eq:set_O} with $\left \langle \dot  {\hat r} - \dot r_{\text{ref}}, i  v_k e^{i \theta_k} \right \rangle = 0$ and $\dot  {\hat r} \neq \dot r_{\text{ref}}$, are unstable. Denote the velocity tracking error as 
\begin{align} \label{eq:velocity_tracking_error}
\dot { \tilde r} & : =  \dot  {\hat r} - \dot r_{\text{ref}}  =  \frac{1}{n} \sum_{k=1}^n v_k e^{i \theta_k}  - v_{\text{ref}} e^{i \theta_{\text{ref}}} \nonumber \\
&=  \|\dot { \tilde r}\| e^{i \phi}
\end{align}
where  $\|\dot { \tilde r}\|$ is the magnitude of $\dot { \tilde r}$, and $\phi := \text{arg}(\dot { \tilde r} )$ by definition.  We call such critical points for which $\|\dot { \tilde r}\| >0$ \emph{undesired equilibria} since they do not achieve a desired reference velocity tracking.

\begin{figure*}
  \centering
  \includegraphics[width=1\columnwidth]{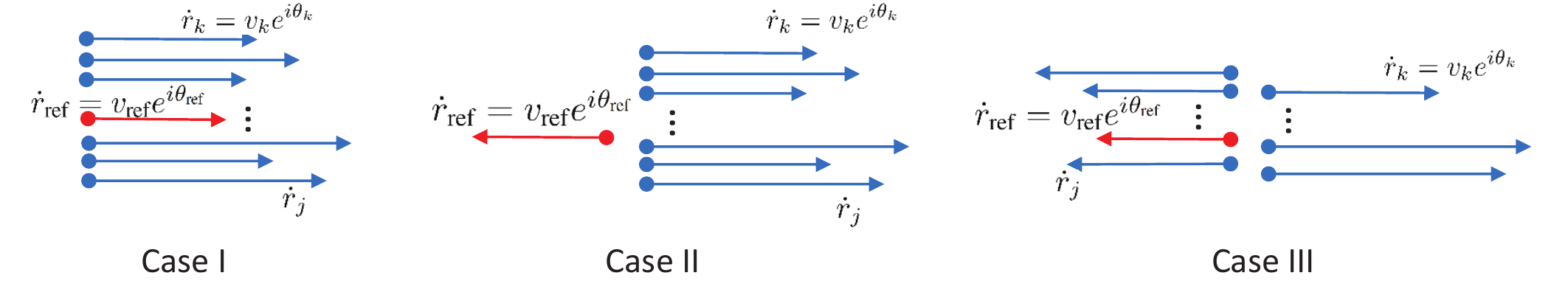}
  \caption{Illustrations of three cases of undesired equilibria under which $\|\dot { \tilde r}\| >0$ and $\text{sin}(\phi - \theta_k) = 0, \forall k$. (a) Case I: $m=0$; (b) Case II: $m=n$; (c) Case III: $1\leq m\leq n-1$. }
  \label{fig:velocity_matching}
\end{figure*}

Note that at an equilibrium point
\begin{align*}
 \left \langle \dot  {\hat r} - \dot r_{\text{ref}}, i  v_k e^{i \theta_k} \right \rangle &= \left \langle \|\dot { \tilde r}\| e^{i \phi}, i  v_k e^{i \theta_k} \right \rangle \\
 &= \|\dot { \tilde r}\| v_k   \text{sin}(\phi - \theta_k) = 0   
\end{align*}
Therefore,  at the undesired equilibria there holds $\text{sin}(\phi - \theta_k) = 0$ for all $k = 1,2,\cdots, n$, which implies that either $\theta_k  =  \phi \,\,\text{mod}\,\, 2\pi$ or $\theta_k  =  \phi+\pi \,\,\text{mod}\,\, 2\pi$.
% In the case that all $\theta_k$ are equal, there are two sub-cases:
%\begin{itemize}
%\item $\theta_k  =  \phi+\pi \,\,\text{mod}\,\, 2\pi$ for all $k$: this is the global maximum of $V(\theta)$ at which the potential $V(\theta)$ achieves its maximal values. A small variation of any $\theta_k$ at this equilibrium will decrease the value of $V(\theta)$;
%\item $\theta_k  =  \phi \,\,\text{mod}\,\, 2\pi$ for all $k$: this is a local maxima and a small variation of any $\theta_k$ will decrease the value of $V(\theta)$.
%\end{itemize}
%We then show other incorrect critical points other than the above two cases are all saddle points.
Let $m$ be the number of vehicles with phase $(\phi+\pi \,\,\text{mod}\,\, 2\pi)$ at one of such  undesired equilibria. The three cases of the undesired equilibria are illustrated in Fig.~\ref{fig:velocity_matching}.  Now consider two extreme cases:
\begin{itemize}
\item The case that $m=0$ indicates that all the vehicles have the same phase $(\phi \,\,\text{mod}\,\, 2\pi)$. An undesired equilibrium with $m = 0$ is a local maximum of $V(\theta)$ and a small variation of any $\theta_k$ will decrease the value of $V(\theta)$. Therefore, any equilibrium with $m = 0$ and $\|\dot { \tilde r}\| >0$ is unstable.  \footnote{In the special case that all vehicles have the same constant speed $v$ that is identical to $v_{\text{ref}}$, i.e., $v_k = v = v_{\text{ref}}, \forall k$, the case of $m=0$ implies that $\dot { \tilde r} = 0$, which is a desired equilibrium and is stable.   }
\item  The case that    $m=n$ indicates that all the vehicles have the same phase $(\phi \,\,\text{mod}\,\, \pi)$. According to \eqref{eq:velocity_tracking_error} this implies that  $\dot  {\hat r} - \dot r_{\text{ref}}  =  \frac{1}{n} \sum_{k=1}^n v_k e^{i (\phi +\pi)}  - v_{\text{ref}} e^{i \phi} = (\frac{1}{n} \sum_{k=1}^n v_k + v_{\text{ref}}) e^{i (\phi +\pi)}$, which is a global maximum of $V(\theta)$ where  a small variation of any $\theta_k$ will decrease the value of $V(\theta)$. Therefore,   any equilibrium with $m = n$ is unstable. 
\end{itemize}
%The  case that $m=n-1$ can hold only if the vehicle with maximum speed is with phase $(\phi \,\,\text{mod}\,\, 2\pi)$ which is also the phase for the reference velocity, and all the other vehicles are with the phase $(\phi+\pi \,\,\text{mod}\,\, 2\pi)$. Such equilibrium is unstable because a small  variation of the phase angle  for the vehicle with maximum speed will decrease the value of $V(\theta)$.

In the following, we will show that all other equilibria  with $1\leq m\leq n-1$ are all saddle points.
%\end{itemize}  Then there holds $0 < m < n$ and there are $n-m$ vehicles with phase $(\phi \,\,\text{mod}\,\, 2\pi)$.
Without loss of generality, we renumber the vehicles such that the first $m$ vehicles are with phase $(\phi+\pi \,\,\text{mod}\,\, 2\pi)$. The diagonal entries of the Hessian of $V$ can be calculated as
\begin{align}
\frac{\partial ^2 V}{\partial \theta_k^2}  &= \left\{
       \begin{array}{cc}
 (1/n) v_k^2 + \|\dot { \tilde r}\| v_k, & k \in \{1, \cdots, m\}  \nonumber  \\
 (1/n) v_k^2 - \|\dot { \tilde r}\| v_k, & k \in \{m+1, \cdots, n\}
\end{array}
\right.
\end{align}
and the off-diagonal entries are
\begin{align}
\frac{\partial ^2 V}{\partial \theta_j \partial \theta_k} & = (1/n) v_j v_k \text{cos} (\theta_j - \theta_k)  \nonumber  \\
& = \left\{
       \begin{array}{cc}
 (1/n) v_j v_k,  & \,\,j, k \in \{1,  \cdots, m\},  \nonumber \\ & \text{or}  \,\,  j, k \in \{m+1,  \cdots, n\}  \nonumber  \\
 - (1/n) v_j v_k,  & \,\, \text{else}
\end{array}
\right.
\end{align}
Therefore, the Hessian can be written in a compact form
\begin{align} \label{eq:Hessian}
H_V = \frac{1}{n} v v^T + \|\dot { \tilde r}\| \text{diag}(v)
\end{align}
where the vector $v$ is defined  as $$v = [v_1,  \cdots, v_m, -v_{m+1}, \cdots, - v_n]^T.$$    
Since there exists at least one diagonal entry in the form of $(1/n) v_k^2 + \|\dot { \tilde r}\| v_k$, which is positive, the Hessian $H_V$ has at least one positive eigenvalue.  We will show the Hessian $H_V$ has at least one negative eigenvalue at any   critical point with $1\leq m\leq n-1$ by proving that there exist  vectors   $q^- \in \mathbb{R}^n$  such that    ${q^-}^T H_V q^- <0$. We first consider the case of $m = n-1$, which can happen  if and  only if the vehicle with maximum speed has phase $(\phi \,\,\text{mod}\,\, 2\pi)$, which is also the phase for the reference velocity. Define ${q^-} = [a_1,..., a_{n-1},1]^T$, where the  constant $a_i$ satisfies $0 \leq a_i <1$ for $i = 1, \cdots, n-1$ and $v^T q^- = 0$. Note that according to Proposition~\ref{ass:velocity_condition}, such non-negative $a_i$ always exists and cannot all be zero, which guarantees the existence of the vector $q^-$. Then a simple calculation yields ${q^-}^T H_V q^- = \sum_{i=1}^{n-1} a_i^2 v_i - v_n = \sum_{i=1}^{n-1} (a_i^2 v_i - a_i v_i) <0$. Therefore, an equilibrium point with $m = n-1$ is  a saddle point and is therefore unstable. 

We then consider the case of $1\leq m< n-1$.   Actually there are many options for constructing such a vector ${q^-}$. Without loss of generality, let us choose
%\begin{align}
%q^+  = [-v_2, v_1, 0,0, \cdots, 0,0]^T
%\end{align}
%and
\begin{align} \label{eq:q_vector}
q^-  = [0,  \cdots,  0, -v_n, v_{n-1}]^T
\end{align}
The existence of such $q^-$ in \eqref{eq:q_vector} is guaranteed because $m<n-1$. Note that there holds %$v^T q^+ = 0$ and
$v^T q^- = 0$. It then follows that
%\begin{align}
%{q^+}^T H_V q^+  & = |\dot { \tilde r}| {q^+}^T \text{diag}(v) q^+ \nonumber \\
%& =  |\dot { \tilde r}| (v_1^2v_2 +v_1 v_2^2) >0
%\end{align}
%Similarly, it can be shown that
\begin{align}
{q^-}^T H_V q^-  & = \|\dot { \tilde r}\| {q^-}^T \text{diag}(v) q^- \nonumber \\
& =  \|\dot { \tilde r}\| (-v_{n-1}^2v_n - v_{n-1} v_n^2) <0
\end{align}
Hence, it is proved that such critical points with  $1\leq m< n-1$ are saddle points and therefore unstable. Consequently, all the undesired equilibria with $\left \langle \dot  {\hat r} - \dot r_{\text{ref}}, i  v_k e^{i \theta_k} \right \rangle = 0$ and $\dot  {\hat r} \neq \dot r_{\text{ref}}$ are unstable.

In summary, the above arguments have  proved that the desired equilibria at which $\dot {\hat r} = \dot r_{\text{ref}}$ are asymptotically stable, and all other equilibria  are unstable. The initial points at which the initial headings of all the vehicles are aligned are excluded in the set of initial positions because these points correspond to unstable equilibria.
\end{proof}

\begin{remark}
The result on the reference velocity tracking in Theorem~\ref{theorem:constant} is relevant and applicable to the target-tracking scenario in which  initially the target position is the same as the centroid position, and the target velocity is constant (which, as will be discussed in Section \ref{sec:spacing_tracking}, is used as the reference velocity). Extensions for time-varying velocity tracking will be discussed in the sequel. We also remark that since the group centroid is tasked to track a reference trajectory,  an average of the positions and headings of all vehicles in the group should be calculated. The proposed tracking control laws can be implemented in either
centralized or decentralized frameworks, while we  prefer a decentralized control by assuming a complete graph   in the reference velocity control to facilitate the centroid calculation. This is justified by the all-to-all communication, a commonly-used assumption in the literature on fixed-wing UAV coordination control.
The all-to-all communication scheme is a special structure of decentralized control that does not require a central station and thus features certain levels of system robustness and scalability. Though the decentralized framework may demand a higher communication overhead than a centralized scheme, it can benefit long-range flight for fixed-wing UAVs and greatly improve  the flight autonomy without interactions with any ground station. We remark that the all-to-all communication structure can be further relaxed by use of a fully distributed control with some consensus-based estimation techniques (which only requires connectivity for an  underlying communication graph). However, a fully distributed control scheme will be traded-off by additional computation and communication overhead for each vehicle, and therefore is not considered in this paper. 
\end{remark}

\begin{remark} \label{remark:real_variables}
(\textbf{Control input in real variables})  Theorem~\ref{theorem:constant} can be seen as an extension of \cite[Theorem 2]{Seyboth2014}, which considered the stabilization problem of the average linear momentum when   $\dot r_{\text{ref}} = 0$.  
%The proof on characterizing saddle points in Theorem 1 also follows similar techniques as in \cite[Theorem 2]{Seyboth2014}. 
The above controller involves  complex numbers and   scalar products of complex vectors. For implementation,  one can calculate the control input  \eqref{eq:controller_constant} in real variables:
\begin{align} \label{eq:real_variable}
u_k^{\text{velocity}}  &=   -  \gamma\left \langle \dot {\hat r} - \dot r_{\text{ref}}, i  v_k e^{i \theta_k} \right \rangle  \nonumber \\
&=   -  \frac{\gamma}{n}  \left \langle  \sum_{j=1}^n v_j e^{i \theta_j} , i  v_k e^{i \theta_k} \right \rangle   + \gamma \left \langle \dot r_{\text{ref}}, i  v_k e^{i \theta_k} \right \rangle \nonumber \\
&= -\frac{\gamma}{n} \sum_{j=1}^n v_k v_j \text{sin}(\theta_j - \theta_k) +  \gamma v_k v_{\text{ref}} \text{sin}(\theta_{\text{ref}} - \theta_k)
\end{align}
% From \eqref{eq:real_variable} it can be seen that the control involves the speed term, which is different to the case of controlling unit-speed vehicles.
The phase dynamics \eqref{eq:vehicle_model_theta} with the controller \eqref{eq:real_variable} written using real variables has a similar form to the Kuramoto oscillator model, which has been studied extensively \cite{ dorfler2014synchronization, kuramoto2003chemical}, but the difference is that the speed term is involved in \eqref{eq:real_variable} for controlling non-identical constant-speed vehicles.  If we assume all vehicles have the same unit speed, the controller \eqref{eq:real_variable} then reduces to the one studied in  \cite{sepulchre2007stabilization, moshtagh2007distributed, klein2006controlled} where oscillator synchronization theory \cite{dorfler2014synchronization, kuramoto2003chemical} can apply.  
\end{remark}

\subsection{Tracking a turning reference  velocity with constant speed}
In this subsection, we consider the tracking control of a reference trajectory with constant speed and turning angular velocity. Motivated by \cite{sepulchre2008stabilization} and \cite{klein2006controlled}, the target dynamics are described by 
\begin{align} \label{eq:turning_reference}
\dot r_{\text{ref}} &= v_{\text{ref}} e^{i \theta_{\text{ref}}}  \nonumber \\
\dot \theta_{\text{ref}} & = \kappa_{\text{ref}}
\end{align}
where $v_{\text{ref}}$ is the constant speed of the reference velocity and $\kappa_{\text{ref}}$ is the angular velocity, which can be constant or non-constant and corresponds to the scaled curvature of the trajectory generated by the velocity.
In the case of tracking a time-varying reference  velocity with a constant speed, the essence of the velocity  tracking control is to design a reference velocity matching controller such that the \textit{constant reference speed} and the \textit{trajectory curvature} can be tracked. The main result is summarized in the following theorem. 
%\begin{assumption}
%The reference speed $v_{\text{ref}}$  is less than  $\dot {\hat r}_{\text{max}} = \frac{1}{n} \sum_{k=1}^n  v_k$.
%\end{assumption}

\begin{theorem}  \label{theorem:varying1}
% For a turning reference velocity with constant speed, the following steering control law
% Suppose the  constant reference velocity  satisfies Assumption 1 and  the initial headings of all the vehicles are not aligned. The following steering control law
Suppose the  turning reference velocity  and all vehicles' constant airspeeds $v_k$ satisfy the conditions in Proposition~\ref{ass:velocity_condition} and the initial headings of all the vehicles are not aligned with the initial phase of the reference velocity. Consider the following steering control law  
% \begin{equation} \label{eq:controller_turning}
% u_k^{\text{velocity}}  =     \frac{v_{\text{ref}}}{v_k}  e^{i(\theta_{\text{ref}} - \theta_k)} \kappa_{\text{ref}} - \gamma \left \langle \dot {\hat r} - \dot r_{\text{ref}}, i  v_k e^{i \theta_k} \right \rangle
% \end{equation}
\begin{equation} \label{eq:controller_turning}
u_k^{\text{velocity}}  =     h_k - \gamma \left \langle \dot {\hat r} - \dot r_{\text{ref}}, i  v_k e^{i \theta_k} \right \rangle
\end{equation}
where $\hat r$ is the group centroid position, $\dot {\hat r}$ is the group centroid velocity as defined previously, $\gamma$ is a positive control gain and the $h_k$ are any real control terms that satisfy
\begin{align} \label{eq:h_k_time_varying1}
\frac{1}{n} \sum_{k=1}^n  i v_k e^{i \theta_k} h_k = i v_{\text{ref}} e^{i \theta_{\text{ref}}} \kappa_{\text{ref}}
\end{align}
Then the equilibrium point for the phase dynamics \eqref{eq:vehicle_model_theta} at which $\dot {\hat r} =  \dot r_{\text{ref}}$ is asymptotically stable and all other equilibria are unstable. The above control law will  almost globally asymptotically stabilize the group centroid velocity of the multi-vehicle group \eqref{eq:vehicle_model} to the desired reference velocity $\dot r_{\text{ref}}$ \eqref{eq:turning_reference}.   
\end{theorem}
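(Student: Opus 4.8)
The plan is to run the Lyapunov/gradient argument of Theorem~\ref{theorem:constant} one derivative higher, at the level of the velocity error $\dot{\tilde r} := \dot{\hat r} - \dot r_{\text{ref}}$, so that the feedforward terms $h_k$ do exactly the job of cancelling the drift $\ddot r_{\text{ref}}$ of the turning reference. First I would differentiate the centroid velocity: since $\dot{\hat r} = \frac1n\sum_{k}v_k e^{i\theta_k}$ and $\dot\theta_k = u_k^{\text{velocity}} = h_k - \gamma\langle\dot{\tilde r}, iv_k e^{i\theta_k}\rangle$, we get $\ddot{\hat r} = \frac1n\sum_{k} iv_k e^{i\theta_k}h_k - \gamma\frac1n\sum_{k} iv_k e^{i\theta_k}\langle\dot{\tilde r}, iv_k e^{i\theta_k}\rangle$. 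The defining property \eqref{eq:h_k_time_varying1} of the $h_k$ makes the first sum equal to $iv_{\text{ref}}e^{i\theta_{\text{ref}}}\kappa_{\text{ref}} = \ddot r_{\text{ref}}$, so the error dynamics collapse to $\ddot{\tilde r} = -\gamma\frac1n\sum_{k} iv_k e^{i\theta_k}\langle\dot{\tilde r}, iv_k e^{i\theta_k}\rangle$, which is structurally the same flow that drove the constant-reference case, only with $\theta_{\text{ref}}$ now time-varying.

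Next I would reuse the Lyapunov candidate $V = \frac12\|\dot{\tilde r}\|^2$. By Lemma~\ref{lemma:basic_calculation}, $\dot V = \langle\dot{\tilde r},\ddot{\tilde r}\rangle = -\gamma\frac1n\sum_{k}\langle\dot{\tilde r}, iv_k e^{i\theta_k}\rangle^2 \le 0$, with $V = 0$ iff $\dot{\hat r} = \dot r_{\text{ref}}$. To turn this into a convergence statement I would pass to the relative phases $\psi_k := \theta_k - \theta_{\text{ref}}$, in which $\dot{\tilde r} = e^{i\theta_{\text{ref}}}\big(\tfrac1n\sum_{k}v_k e^{i\psi_k} - v_{\text{ref}}\big)$ and hence, by rotation invariance of $\langle\cdot,\cdot\rangle$, both $V$ and the terms $\langle\dot{\tilde r}, iv_k e^{i\theta_k}\rangle = \langle\tfrac1n\sum_{j}v_j e^{i\psi_j} - v_{\text{ref}}, iv_k e^{i\psi_k}\rangle$ depend only on $\psi\in\mathbb{T}^n$; in fact $V(\psi)$ has exactly the form of the potential \eqref{eq:V_function1} with reference phase set to zero. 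When $\kappa_{\text{ref}}$ is constant the $\psi$-dynamics are autonomous on the compact torus (the $h_k$ may be taken as functions of $\psi$), so LaSalle's principle applies and trajectories converge to the largest invariant subset of $\{\psi : \langle\dot{\tilde r}, iv_k e^{i\psi_k}\rangle = 0,\ \forall k\}$; the classification of the configurations in this set into the attracting set $\dot{\tilde r} = 0$ (i.e. $\dot{\hat r} = \dot r_{\text{ref}}$) and the unstable ones with $\|\dot{\tilde r}\| > 0$ is then word-for-word the case split on the number $m$ of vehicles with the opposite phase and the Hessian computation \eqref{eq:Hessian}--\eqref{eq:q_vector} of Theorem~\ref{theorem:constant}, which is precisely where Proposition~\ref{ass:velocity_condition} enters; the aligned-heading initial data are excluded for the same reason as before.

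For a genuinely time-varying $\kappa_{\text{ref}}$ the vector field is non-autonomous and LaSalle no longer applies directly; here I would argue via Barbalat's lemma. Since $V \ge 0$ is non-increasing it converges, and $\ddot V$ is bounded provided $\kappa_{\text{ref}}$ (and, for $\ddot V$, $\dot\kappa_{\text{ref}}$) are bounded, because all the other quantities live on compact or bounded sets; hence $\dot V \to 0$, so $\langle\dot{\tilde r}, iv_k e^{i\theta_k}\rangle \to 0$ for every $k$. Feeding this back into the error dynamics for $\dot{\tilde r}$ and invoking the instability of the non-zero-error configurations then yields almost-global convergence of $\dot{\hat r}$ to $\dot r_{\text{ref}}$.

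I expect the main obstacle to be exactly the time-varying-$\kappa_{\text{ref}}$ case: showing that the only behaviour compatible with $\langle\dot{\tilde r}, iv_k e^{i\theta_k}\rangle\to 0$ for all $k$ together with the error dynamics is $\dot{\tilde r}\to 0$ off a measure-zero set, and making the instability of the undesired configurations robust to the time dependence, needs more care than the clean autonomous gradient argument. Everything else is a coordinate change plus a verbatim re-use of Theorem~\ref{theorem:constant}.
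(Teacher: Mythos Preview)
Your approach is essentially the paper's: derive the error dynamics $\ddot{\tilde r}$ by using the defining condition \eqref{eq:h_k_time_varying1} on the $h_k$ to cancel $\ddot r_{\text{ref}}$, take the same Lyapunov function $V=\tfrac12\|\dot{\tilde r}\|^2$, obtain the identical expression $\dot V = -\tfrac{\gamma}{n}\sum_k\langle\dot{\tilde r},iv_ke^{i\theta_k}\rangle^2\le 0$, and then invoke the equilibrium classification from Theorem~\ref{theorem:constant}. The only difference is in how the non-autonomous nature is dispatched: the paper handles both constant and non-constant $\kappa_{\text{ref}}$ in one stroke by citing Khalil's invariance-like principle for non-autonomous systems (Theorem~8.4), whereas you split cases, using the relative-phase change $\psi_k=\theta_k-\theta_{\text{ref}}$ to render the constant-$\kappa_{\text{ref}}$ case autonomous and Barbalat's lemma for the time-varying case; these are equivalent routes to the same conclusion, with your coordinate change arguably making the re-use of the Hessian analysis from Theorem~\ref{theorem:constant} more transparent.
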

 Theorem \ref{theorem:varying1} is a generalization of Theorem \ref{theorem:constant}. If the time-varying component $\dot \theta_{\text{ref}}=\kappa_{\text{ref}}$ of the turning reference velocity is zero, then the control \eqref{eq:controller_turning} can be reduced to the one in Theorem~\ref{theorem:constant} as one can take all $h_k$ equal to zero. 
 
\begin{proof}
The proof can be seen as an extension of the proof for Theorem \ref{theorem:constant}.  
%We denote  a new variable  $ \dot { \tilde r}  : =  \dot  {\hat r} - \dot r_{\text{ref}}$ as the velocity tracking error.
%\begin{align}
%\dot { \tilde r} & : =  \dot  {\hat r} - \dot r_{\text{ref}}  =  \frac{1}{n} \sum_{k=1}^n v_k e^{i \theta_k}  - v_{\text{ref}} e^{i \theta_{\text{ref}}} \nonumber \\
%&=  |\dot { \tilde r}| e^{i \phi}
%\end{align}
Note that
\begin{align}
\ddot {\hat r}  & =  \frac{1}{n} \sum_{k=1}^n v_k e^{i \theta_k} i \dot \theta_k =    \frac{1}{n} \sum_{k=1}^n i v_k e^{i \theta_k}  u_k^{\text{velocity}}
\end{align}
and
\begin{align}
 \ddot r_{\text{ref}} &= i v_{\text{ref}} e^{i \theta_{\text{ref}}} \kappa_{\text{ref}}
\end{align}
Then the dynamics for the velocity tracking error can be written as
\begin{align} \label{eq:velocity_error_system1}
\ddot { \tilde r} & = \ddot {\hat r}  - \ddot r_{\text{ref}}    \nonumber  \\
& =    \frac{1}{n} \sum_{k=1}^n i v_k e^{i \theta_k}  u_k^{\text{velocity}} -  i v_{\text{ref}} e^{i \theta_{\text{ref}}} \kappa_{\text{ref}}  \nonumber  \\
& = -\frac{1}{n}\sum_{k=1}^n i v_k e^{i \theta_k}  \gamma \left \langle \dot {\hat r} - \dot r_{\text{ref}}, i  v_k e^{i \theta_k} \right \rangle
\end{align}
%We consider a variable transformation by defining a new variable vector $\bar \theta = \theta - \theta_{\text{ref}}$. Hence, $\bar {\dot \theta} = \dot \theta -  \dot \theta_{\text{ref}}$.
Consider the same Lyapunov-like  function $V = \frac{1}{2}\|\dot {\hat r} - \dot r_{\text{ref}}\|^2$  as used earlier in studying   the convergence analysis.
Note that $V(\dot { \tilde r})$ is radially unbounded and positive definite, and $V(\dot { \tilde r})$ is zero if and only if $\dot {\hat r} = \dot r_{\text{ref}}$, or equivalently $|\dot { \tilde r}| = 0$. Furthermore,   the control function \eqref{eq:controller_turning} is continuous in the time $t$ and locally Lipschitz in the state $\theta$ (uniformly in $t$). 

The  derivative of $V$ along the trajectories of the velocity tracking error system \eqref{eq:velocity_error_system1} can be calculated as
%\begin{align}
%\dot V & = \left \langle \dot {\hat r} - \dot r_{\text{ref}}, \ddot {\hat r}  - \ddot r_{\text{ref}}  \right \rangle \nonumber  \\
%& = \left \langle \dot {\hat r} -  \dot r_{\text{ref}}, \frac{1}{n}\sum_{k=1}^n i v_k e^{i \theta_k}  u_k^{\text{velocity}}   - i v_{\text{ref}} e^{i \theta_{\text{ref}}} \kappa_{\text{ref}} \right \rangle
%\end{align}
%By inserting the designed controller \eqref{eq:controller_turning} into the above equation, one can show
%\begin{align}
%%\dot V =& \left \langle \dot {\hat r} -  \dot r_{\text{ref}}, \frac{1}{n}\sum_{k=1}^n i v_k e^{i \theta_k}  \frac{v_{\text{ref}}}{v_k}  e^{i(\theta_{\text{ref}} - \theta_k)} \dot \theta_{\text{ref}} - \gamma \left \langle \dot {\hat r} - \dot r_{\text{ref}}, i  v_k e^{i \theta_k} \right \rangle    \right \rangle  \nonumber  \\
%\dot V  =& \left \langle \dot {\hat r} -  \dot r_{\text{ref}},  -\frac{1}{n}\sum_{k=1}^n i v_k e^{i \theta_k}  \gamma \left \langle \dot {\hat r} - \dot r_{\text{ref}}, i  v_k e^{i \theta_k} \right \rangle    \right \rangle  \nonumber  \\
%& +  \left \langle \dot {\hat r} -  \dot r_{\text{ref}},  \frac{1}{n}\sum_{k=1}^n i v_k e^{i \theta_k}  \frac{v_{\text{ref}}}{v_k}  e^{i(\theta_{\text{ref}} - \theta_k)} \kappa_{\text{ref}}    \right \rangle  \nonumber  \\
%& + \left \langle \dot {\hat r} -  \dot r_{\text{ref}},    - i v_{\text{ref}} e^{i \theta_{\text{ref}}} \kappa_{\text{ref}} \right \rangle \nonumber  \\
% =& - \frac{\gamma}{n} \sum_{k=1}^n \left \langle \dot  {\hat r} - \dot r_{\text{ref}}, i  v_k e^{i \theta_k} \right \rangle ^2 \leq 0
%\end{align}
\begin{align}
\dot V & = \left \langle \dot {\hat r} - \dot r_{\text{ref}}, \ddot {\hat r}  - \ddot r_{\text{ref}}  \right \rangle \nonumber  \\
& = \left \langle \dot {\hat r} -  \dot r_{\text{ref}}, -\frac{1}{n}\sum_{k=1}^n i v_k e^{i \theta_k}  \gamma \left \langle \dot {\hat r} - \dot r_{\text{ref}}, i  v_k e^{i \theta_k} \right \rangle    \right \rangle  \nonumber  \\
& = - \frac{\gamma}{n} \sum_{k=1}^n \left \langle \dot  {\hat r} - \dot r_{\text{ref}}, i  v_k e^{i \theta_k} \right \rangle ^2 \leq 0
\end{align}
% From the above it can  be seen that the system \eqref{eq:vehicle_model} with the designed controller \eqref{eq:controller_turning} describes a \emph{gradient flow} of the potential $V$.
By an invariance-like principle for non-autonomous systems \cite[Theorem 8.4]{khalil2002nonlinear} and similar arguments as in the proof of Theorem \ref{theorem:constant}, it can be proven that  the solution of \eqref{eq:vehicle_model} converges to a stable equilibrium in the set ${\dot V = 0}$ with $\left \langle \dot  {\hat r} - \dot r_{\text{ref}}, i  v_k e^{i \theta_k} \right \rangle   = 0, \forall k$ and $\dot { \tilde r} = 0$, which is equivalent to $\dot {\hat r} =  \dot r_{\text{ref}}$  indicating that the time-varying reference velocity can be successfully tracked in the limit. 
\end{proof}

\begin{remark} 
The conditions in Proposition~\ref{ass:velocity_condition} guarantee the existence  of the desired equilibrium $\dot {\hat r} =  \dot r_{\text{ref}}$.
\end{remark}

We discuss how to calculate the $h_k$ below in Section \ref{sec: hk}.

% \begin{remark}
% (\textbf{Control input in real variables}) 

% \end{remark}

\subsection{Tracking a   time-varying reference  velocity}
With the preparation of tracking controller analysis in the above subsections, in this subsection, we consider the most general case. We show the design of a general form of velocity tracking controller to regulate the group centroid velocity that aims to track a desired  time-varying reference velocity $\dot r_{\text{ref}}(t) =  v_{\text{ref}}(t) e^{i \theta_{\text{ref}}(t)}$. 
The equation of the reference velocity can be written as (see e.g., \cite{sepulchre2008stabilization} and \cite{klein2006controlled})
\begin{align} \label{eq:timevarying_velocity}
\dot r_{\text{ref}} &= v_{\text{ref}(t)} e^{i \theta_{\text{ref}}(t)}  \nonumber \\
\dot \theta_{\text{ref}} & = \kappa_{\text{ref}}(t) \nonumber \\
\dot v_{\text{ref}} &= a_{\text{ref}}(t)
\end{align}
where $\kappa_{\text{ref}}(t)$ and $a_{\text{ref}}(t)$ can be time-varying functions. To avoid cumbersome notations we will omit the argument $t$ in the following analysis. 
% where $\kappa_{\text{ref}}$ and $a_{\text{ref}}$ are dynamics functions which generate the $v_{\text{ref}}$ and $\theta_{\text{ref}}$.
% where  $\kappa_{\text{ref}}$ is for the curvature and , which corresponds to the curvature of the trajectory generated by the velocity.

%\begin{assumption}
%The reference speed $v_{\text{ref}}$  is less than  $\dot {\hat r}_{\text{max}} = \frac{1}{n} \sum_{k=1}^n  v_k$.
%\end{assumption}

\begin{theorem} \label{theorem:varying2}
%Suppose the time-varying reference velocity \eqref{eq:timevarying_velocity} satisfies Assumption 1 and  the initial headings of all the vehicles are not aligned. The following steering control law
Suppose the time-varying reference velocity \eqref{eq:timevarying_velocity} satisfies the conditions in Proposition~\ref{ass:velocity_condition} and  the initial headings of all the vehicles are not aligned with the initial phase of the reference velocity. Consider the following steering control law
%\begin{equation} \label{eq:controller_timevarying}
%u_k^{\text{velocity}}  =    - \gamma \left \langle \dot {\hat r} - \dot r_{\text{ref}}, i  v_k e^{i \theta_k} \right \rangle + h_k(\theta,  \ddot r_{\text{ref}})
%\end{equation}
\begin{equation} \label{eq:controller_timevarying}
u_k^{\text{velocity}}  =    - \gamma \left \langle \dot {\hat r} - \dot r_{\text{ref}}, i  v_k e^{i \theta_k} \right \rangle + h_k 
\end{equation}
where $\gamma$ is a positive control gain and  the additional control terms  $h_k$  that are designed for tracking a time-varying reference velocity satisfy
\begin{align} \label{eq:h_k}
\frac{1}{n} \sum_{k=1}^n  i v_k e^{i \theta_k} h_k = i v_{\text{ref}} e^{i \theta_{\text{ref}}} \kappa_{\text{ref}} + a_{\text{ref}} e^{i \theta_{\text{ref}}}
\end{align}
%for  $k=1,\cdots,n$. 
Then the equilibrium set at which $\dot {\hat r}  = \dot r_{\text{ref}}$ is asymptotically stable and all other equilibria are unstable.
The above control law \eqref{eq:controller_timevarying} with the additional input \eqref{eq:h_k} will  almost globally asymptotically stabilize the group centroid velocity to the desired   reference velocity $\dot r_{\text{ref}}$. 
%Furthermore, the control terms $h_k$ always exist if the headings of all the vehicles are not synchronized or anti-synchronized.
\end{theorem}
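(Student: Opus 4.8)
The plan is to reduce this to the situation already treated in the proof of Theorem~\ref{theorem:varying1} by showing that the feedforward terms $h_k$ built via \eqref{eq:h_k} cancel $\ddot r_{\text{ref}}$ exactly. First I would differentiate the centroid velocity, $\ddot{\hat r} = \frac{1}{n}\sum_{k=1}^n i v_k e^{i\theta_k} u_k^{\text{velocity}}$, and substitute \eqref{eq:controller_timevarying}. The feedback part reproduces $-\frac{\gamma}{n}\sum_{k=1}^n i v_k e^{i\theta_k}\langle \dot{\hat r}-\dot r_{\text{ref}}, i v_k e^{i\theta_k}\rangle$ as before, and the feedforward part contributes $\frac{1}{n}\sum_{k=1}^n i v_k e^{i\theta_k} h_k$, which by \eqref{eq:h_k} equals $i v_{\text{ref}}e^{i\theta_{\text{ref}}}\kappa_{\text{ref}} + a_{\text{ref}}e^{i\theta_{\text{ref}}}$. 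Differentiating \eqref{eq:timevarying_velocity} gives $\ddot r_{\text{ref}} = a_{\text{ref}}e^{i\theta_{\text{ref}}} + i v_{\text{ref}}e^{i\theta_{\text{ref}}}\kappa_{\text{ref}}$, so the feedforward term is precisely $\ddot r_{\text{ref}}$ and the velocity tracking error $\dot{\tilde r} := \dot{\hat r}-\dot r_{\text{ref}}$ obeys
\begin{align*}
\ddot{\tilde r} = -\frac{\gamma}{n}\sum_{k=1}^n i v_k e^{i\theta_k}\left\langle \dot{\tilde r},\, i v_k e^{i\theta_k}\right\rangle,
\end{align*}
which is identical in form to the error system \eqref{eq:velocity_error_system1} analyzed for Theorem~\ref{theorem:varying1}.

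Second, I would reuse $V = \frac12\|\dot{\hat r}-\dot r_{\text{ref}}\|^2$, which is positive definite and radially unbounded in $\dot{\tilde r}$, vanishes iff $\dot{\tilde r}=0$, and satisfies $\dot V = -\frac{\gamma}{n}\sum_{k=1}^n\langle\dot{\tilde r}, i v_k e^{i\theta_k}\rangle^2 \le 0$ along the error dynamics. Assuming $v_{\text{ref}},\kappa_{\text{ref}},a_{\text{ref}}$ are bounded and regular enough that a continuous selection of the $h_k$ solving \eqref{eq:h_k} exists (solvability again resting on the speed conditions of Proposition~\ref{ass:velocity_condition}), the closed loop is continuous in $t$ and locally Lipschitz in $\theta$ uniformly in $t$. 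I would then invoke the invariance-like principle for non-autonomous systems \cite[Theorem 8.4]{khalil2002nonlinear}: every bounded solution converges to the largest invariant set inside $\{\langle\dot{\tilde r}, i v_k e^{i\theta_k}\rangle = 0,\ k=1,\dots,n\}$. Writing $\dot{\tilde r} = \|\dot{\tilde r}\|e^{i\phi}$, this set forces $\sin(\phi-\theta_k)=0$ for all $k$; letting $m$ count the vehicles antialigned with $\phi$, the Hessian $H_V = \frac1n vv^T + \|\dot{\tilde r}\|\,\mathrm{diag}(v)$ computed in the error coordinates has a negative eigenvalue whenever $\|\dot{\tilde r}\|>0$ (through the vectors $q^-$ whose existence is guaranteed by Proposition~\ref{ass:velocity_condition}), exactly as in the proof of Theorem~\ref{theorem:constant}. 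Hence every undesired equilibrium set is unstable, only $\dot{\tilde r}=0$ is attracting, and convergence is almost global, the exceptional set being the measure-zero set of initial headings aligned with the reference phase.

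The main obstacle is not the Lyapunov estimate — which is essentially a transcription of the earlier proofs once the cancellation above is noticed — but the care needed in the non-autonomous setting: one must confirm that \eqref{eq:h_k}, an underdetermined linear system in the $h_k$, is solvable for every admissible $\theta$ (this is where Proposition~\ref{ass:velocity_condition} re-enters, ensuring $\dot r_{\text{ref}}$ lies in the attainable velocity set and $\ddot r_{\text{ref}}$ in the range of $\frac1n\sum_k i v_k e^{i\theta_k}h_k$), one must pick a selection of the $h_k$ that keeps the right-hand side locally Lipschitz and the trajectories bounded so that \cite[Theorem 8.4]{khalil2002nonlinear} applies, and one must justify transferring the Hessian-based instability classification to the present case, where the undesired equilibria are genuinely time-varying because the headings co-rotate with the moving reference — which is handled by passing to the error coordinates, in which the dynamics are the same autonomous-looking system already studied. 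I would state the required boundedness/regularity assumptions on $v_{\text{ref}},\kappa_{\text{ref}},a_{\text{ref}}$ explicitly at the outset of the proof.
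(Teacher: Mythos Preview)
Your proposal is correct and follows essentially the same route as the paper: compute $\ddot{\hat r}$ and $\ddot r_{\text{ref}}$, observe that the condition \eqref{eq:h_k} makes the feedforward contribution cancel $\ddot r_{\text{ref}}$ exactly so that the error dynamics reduce to the same form as \eqref{eq:velocity_error_system1}, then apply the Lyapunov function $V=\tfrac12\|\dot{\hat r}-\dot r_{\text{ref}}\|^2$ together with \cite[Theorem~8.4]{khalil2002nonlinear} and the Hessian-based instability classification from Theorem~\ref{theorem:constant}. If anything, you are more explicit than the paper about the regularity and solvability hypotheses needed to invoke the non-autonomous invariance principle; the paper defers the discussion of solvability of \eqref{eq:h_k} to a separate subsection and additionally cites \cite[Theorem~8.5]{khalil2002nonlinear} for the uniform asymptotic stability conclusion.
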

%The proof proceeds like   that of Theorem 1, and is omitted here due to the space limit.   It can be seen from Theorem 1 and 2 that the controller implementation requires the orientation and velocity of each vehicle measured in a global coordinate system. For the time-varying reference  velocity case, additional information of $\kappa_{\text{ref}}$ and $a_{\text{ref}}$ is required.
Theorem \ref{theorem:varying2} is a generalization of both Theorem \ref{theorem:varying1} and Theorem \ref{theorem:constant} and it treats the most general case for tracking a \textit{time-varying} reference velocity. 
If the headings of all the vehicles are not all synchronized or anti-synchronized with each other over time, then  the control terms $h_k$ are guaranteed to exist. 
We note here that there exist multiple choices for  the additional controller term $h_k$ in the right hand side of  \eqref{eq:controller_timevarying}, which will be discussed in the next subsection. 
\begin{proof}
The proof takes similar steps as that in Theorem \ref{theorem:constant}. The differences lie in the \emph{feedforward control} term $h_k$ to address the time-varying terms in the reference velocity $\dot r_{\text{ref}}$. Again, we denote the velocity tracking error as $ \dot { \tilde r}  : =  \dot  {\hat r} - \dot r_{\text{ref}}$. 
Note that
\begin{align}
\ddot {\hat r}  & =  \frac{1}{n} \sum_{k=1}^n v_k e^{i \theta_k} i \dot \theta_k =   \frac{1}{n} \sum_{k=1}^n i v_k e^{i \theta_k}  u_k^{\text{velocity}}
\end{align}
and
\begin{align}
 \ddot r_{\text{ref}} &= i v_{\text{ref}} e^{i \theta_{\text{ref}}} \kappa_{\text{ref}} + a_{\text{ref}} e^{i \theta_{\text{ref}}}
\end{align}
According to \eqref{eq:controller_timevarying},   the dynamics for the velocity tracking error can be calculated as
\begin{align} \label{eq:velocity_error_system2}
\ddot { \tilde r} & = \ddot {\hat r}  - \ddot r_{\text{ref}}    \nonumber  \\
& = -\frac{ \gamma}{n}\sum_{k=1}^n i v_k e^{i \theta_k}  \left \langle \dot {\hat r} - \dot r_{\text{ref}}, i  v_k e^{i \theta_k} \right \rangle
\end{align}
We consider the same Lyapunov function $V = \frac{1}{2}\|\dot {\hat r} - \dot r_{\text{ref}}\|^2$ which measures the difference between the current centroid velocity and the desired reference velocity. Note that $V(\dot { \tilde r}) \geq 0$ and $V(\dot { \tilde r})$ is zero if and only if $\dot {\hat r} = \dot r_{\text{ref}}$, or equivalently $|\dot { \tilde r}| = 0$.  The  derivative of $V$ along the trajectories of the velocity tracking error system \eqref{eq:velocity_error_system2} can be calculated as
%\begin{align} \label{eq:dotV}
%\dot V & = \left \langle \dot {\hat r} - \dot r_{\text{ref}}, \ddot {\hat r}  - \ddot r_{\text{ref}}  \right \rangle \nonumber  \\
%& = \left \langle \dot {\hat r} - \dot r_{\text{ref}},  \frac{1}{n}  \sum_{k=1}^n i v_k e^{i \theta_k}  u_k^{\text{velocity}}   -  \ddot r_{\text{ref}} \right \rangle
%\end{align}
%By inserting the designed controller \eqref{eq:controller_timevarying} into \eqref{eq:dotV}, the above calculation can be proceeded as
%\begin{small}
%\begin{align}
%\dot V   = & \left \langle \dot {\hat r} -  \dot r_{\text{ref}}, \frac{1}{n} \sum_{k=1}^n i v_k e^{i \theta_k}   \left( -\gamma \left \langle \dot {\hat r} - \dot r_{\text{ref}}, i  v_k e^{i \theta_k} \right \rangle + h_k \right)  - \ddot r_{\text{ref}} \right \rangle  \nonumber  \\
%=  & \left \langle \dot {\hat r} -  \dot r_{\text{ref}}, -\frac{\gamma}{n} \sum_{k=1}^n i v_k e^{i \theta_k}   \left \langle \dot {\hat r} - \dot r_{\text{ref}}, i  v_k e^{i \theta_k} \right \rangle \right \rangle    \nonumber  \\
%& +  \left \langle \dot {\hat r} -  \dot r_{\text{ref}}, \frac{1}{n} \sum_{k=1}^n i v_k e^{i \theta_k}  h_k   - \ddot {\hat r} \right \rangle  \nonumber  \\
% = &- \frac{\gamma}{n} \sum_{k=1}^n \left \langle \dot  {\hat r} - \dot r_{\text{ref}}, i  v_k e^{i \theta_k} \right \rangle ^2 \leq 0
%\end{align}
%\end{small}
\begin{align} \label{eq:dotV}
\dot V & = \left \langle \dot {\hat r} - \dot r_{\text{ref}}, \ddot {\hat r}  - \ddot r_{\text{ref}}  \right \rangle \nonumber  \\
& = \left \langle \dot {\hat r} -  \dot r_{\text{ref}}, -\frac{ \gamma}{n}\sum_{k=1}^n i v_k e^{i \theta_k}  \left \langle \dot {\hat r} - \dot r_{\text{ref}}, i  v_k e^{i \theta_k} \right \rangle    \right \rangle  \nonumber  \\
& = - \frac{\gamma}{n} \sum_{k=1}^n \left \langle \dot  {\hat r} - \dot r_{\text{ref}}, i  v_k e^{i \theta_k} \right \rangle ^2 \leq 0
\end{align}
Note that due to the feedforward control term $h_k$ in the control \eqref{eq:controller_timevarying} the dynamics $\dot \theta_k = u_k^{\text{velocity}}$ are non-autonomous. In the above analysis, one observes that the control function \eqref{eq:controller_timevarying} is continuous in the time $t$ and locally Lipschitz in the state $\theta$ (uniformly in $t$). Furthermore, the Lyapunov function $V(\dot { \tilde r}) = \frac{1}{2}\|\dot {\hat r} - \dot r_{\text{ref}}\|^2$ (which can be seen as a function of the velocity tracking error  $\dot { \tilde r} := {\hat r} - \dot r_{\text{ref}}$) is radially unbounded and positive definite for $|\dot { \tilde r}| \neq 0$, and  $\dot V \leq 0$. These conditions allow us to invoke the invariance-like principle for non-autonomous system \cite[Theorem 8.4]{khalil2002nonlinear}. Therefore, one concludes that all solutions of \eqref{eq:vehicle_model} with the controller \eqref{eq:controller_timevarying} converge to the  set $\{\dot V =0\}$ in which 
\begin{align} \label{eq:set_new_invariance}
    \left \langle \dot  {\hat r} - \dot r_{\text{ref}}, i  v_k e^{i \theta_k} \right \rangle   = 0, \,\,k = 1,2, \cdots, n
\end{align}  
Then by following  similar arguments as in the proof of Theorem \ref{theorem:constant} and by invoking  \cite[Theorem 8.5]{khalil2002nonlinear} (noting that $V$ decreases over the interval $[t, t+\delta), \forall t\geq 0$ and for some   $\delta>0$), it can be proven that  the desired equilibrium  in which $\dot { \tilde r} := {\hat r} - \dot r_{\text{ref}} = 0$ in the   set \eqref{eq:set_new_invariance} is uniformly asymptotically stable,  which indicates that the  time-varying reference velocity can be successfully tracked. Again, Proposition~\ref{ass:velocity_condition} guarantees the existence  of the desired equilibrium at which $|\dot { \tilde r}| = 0$. A similar analysis involving the Hessian matrix as in the proof of Theorem \ref{theorem:constant} 
(noting   the Lyapunov function $V(\dot { \tilde r})$ as a function of the velocity tracking error  $\dot { \tilde r}$) 
shows all other  points with  $|\dot { \tilde r}| \neq 0$ are unstable.
\end{proof}

% \begin{remark}
% (\textbf{Control input in real variables}) 

% \end{remark}

% \begin{remark}
% One can see that the result in Theorem \ref{theorem:varying1} can be considered as a special case of Theorem \ref{theorem:varying2}, by noticing that $a_{\text{ref}} = 0$ in Theorem \ref{theorem:varying2}.  We keep Theorem \ref{theorem:varying1} as it provides an explicit expression of the tracking controller form. The existence and calculation of the control $h_k$ will be discussed in the sequel.
% \end{remark}

\subsection{Discussion}
\label{sec: hk}
In this subsection, we discuss how to design the control terms $h_k$ in Theorems~\ref{theorem:varying1} and~\ref{theorem:varying2}.  
The control terms $h_k$ in \eqref{eq:h_k_time_varying1} and \eqref{eq:controller_timevarying} serve as  feedforward controls which are necessary to track the \textit{time-varying} component of the reference velocity.  

By denoting $h = [h_1, h_2, \cdots, h_n]^T$ and separating  a complex variable into real part and complex part in the form   $e^{i\theta}:= [\text{cos}(\theta), \text{sin}(\theta)]^T$,  one obtains the left-hand side of \eqref{eq:h_k_time_varying1} and \eqref{eq:h_k} in the following equivalent form:
\begin{align} \label{eq:design_hk}
&\frac{1}{n} \sum_{k=1}^n  i v_k e^{i \theta_k} h_k \nonumber \\
&= \frac{1}{n} \left[
\begin{array}{cccc}
-v_1\text{sin}(\theta_1) & -v_2\text{sin}(\theta_2) & \cdots & -v_n\text{sin}(\theta_n) \\
v_1\text{cos}(\theta_1) & v_2 \text{cos}(\theta_2) & \cdots & v_n\text{cos}(\theta_n)
\end{array} 
\right] h  \nonumber \\
& : =Ah 
\end{align}
The right-hand sides of \eqref{eq:h_k_time_varying1} and \eqref{eq:h_k} are rewritten respectively as
\begin{align}
b_1:  = i v_{\text{ref}} e^{i \theta_{\text{ref}}} \kappa_{\text{ref}}   
  =
\left[
\begin{array}{c}
 - v_{\text{ref}}\text{sin}(\theta_{\text{ref}}) \kappa_{\text{ref}}\\
 + v_{\text{ref}}\text{cos}(\theta_{\text{ref}}) \kappa_{\text{ref}}\\
\end{array}
\right]
 \end{align}
\begin{align}
b_2: &= i v_{\text{ref}} e^{i \theta_{\text{ref}}} \kappa_{\text{ref}} + a_{\text{ref}} e^{i \theta_{\text{ref}}}  \nonumber \\
& =
\left[
\begin{array}{c}
a_{\text{ref}} \text{cos}(\theta_{\text{ref}}) - v_{\text{ref}}\text{sin}(\theta_{\text{ref}}) \kappa_{\text{ref}}\\
a_{\text{ref}} \text{sin}(\theta_{\text{ref}}) + v_{\text{ref}}\text{cos}(\theta_{\text{ref}}) \kappa_{\text{ref}}\\
\end{array}
\right]
 \end{align}
Then the calculation of $h$ is equivalent  to solving a linear equation with a standard form $Ah  =   b_1$ or $Ah  =   b_2$. Since $A \in \mathbb{R}^{2 \times n}$, a sufficient condition to guarantee the existence of the  solution  is $\text{rank}(A) = 2$. For the $n$-vehicle group, the rank condition is satisfied if and only there is if at least one pair of vehicles whose headings are not aligned or anti-aligned (i.e., $\theta_i \neq \theta_j$ or $\theta_i \neq \theta_j + \pi$ for at least one pair of vehicles $i,j$). In the case of a two-vehicle group, a unique solution exists as $h = A^{-1}b$ when the two vehicles are not aligned or anti-aligned in their headings. For an $n$-vehicle group, given that the rank condition is satisfied, the solution is not unique and this provides flexibility in the controller design, while a standard least 2-norm solution could be preferred.

\begin{remark}
The reference velocity tracking in the tracking control framework is inspired by the previous papers \cite{sepulchre2007stabilization,klein2006controlled}. In contrast, here we are considering a  heterogeneous vehicle group with  constant non-identical speeds for individual vehicles, which is more general than \cite{sepulchre2007stabilization,klein2006controlled}, which discussed tracking control with unit-speed unicycle-type agents. The present results are also extensions of the tracking control strategies for  unit-speed unicycles using two special motion primitives (i.e., circular motion and parallel motion) discussed  in \cite{sepulchre2007stabilization}. A rigorous proof for the convergence of velocity tracking is presented, which is lacking in  \cite{klein2006controlled}.  Also, the non-identical speed constraints in the group present limitations for a successful tracking; note we have stated  explicit necessary conditions on the maximum reference speed and the minimum vehicle speed in Proposition~\ref{ass:velocity_condition} for the tracking controller design. 
\end{remark}

\section{Controller design phase II: Reference trajectory generation and spacing control}  \label{sec:spacing_tracking}
\subsection{Reference velocity generation for target trajectory tracking}
In the above section, we have designed reference velocity tracking controllers so that the group centroid of the fixed-wing UAVs can successfully track a reference trajectory by matching a reference velocity. In order to guarantee a successful tracking of a target, the reference trajectory should include both the target's trajectory and velocity in the construction of the reference velocity. We propose the following reference velocity
\begin{equation} \label{eq:reference_velocity_commend}
\dot r_{\text{ref}} =  \dot r_{\text{target}} + w(r_{\text{target}} - \hat r(t)) 
\end{equation}
 where $w>0$ is a weighting parameter on the relative position between the target $r_{\text{target}}$ and the group centroid $\hat r(t)$. The weighting parameter $w$ can be used to adjust the convergence speed of position tracking. A larger value of $w$ puts more weight on asymptotically tracking the position of the target, which enables a fast track to the target's trajectory. If initially the group centroid $\hat r(t)$ is collocated with the target position, then the target velocity can be used as the reference velocity. Otherwise, the relative position term $(r_{\text{target}} - \hat r(t))$ is involved in the reference velocity as a feedback term to guarantee a successful tracking to the target. The following lemma shows that by using the constructed reference velocity in \eqref{eq:reference_velocity_commend} and the velocity tracking controller in Section~\ref{sec:velocity_tracking}, the target's trajectory can be asymptotically tracked.
\begin{lemma} \label{lemma:tracking_spacing}
With the reference velocity tracking controller in \eqref{eq:controller_timevarying} designed in    Section~\ref{sec:velocity_tracking}., and the constructed reference velocity in \eqref{eq:reference_velocity_commend}, the centroid of the fixed-wing UAV group  will be asymptotically stabilized to match the target trajectory.
\end{lemma}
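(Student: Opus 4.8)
\emph{Proof proposal.} The plan is to combine the velocity-tracking result of Theorem~\ref{theorem:varying2} with a cascade argument for the position error. I would first introduce the position tracking error $e(t):=r_{\text{target}}(t)-\hat r(t)$ and recall the velocity tracking error $\dot{\tilde r}:=\dot{\hat r}-\dot r_{\text{ref}}$. Differentiating $e$ and substituting the reference velocity \eqref{eq:reference_velocity_commend}, i.e.\ $\dot r_{\text{ref}}=\dot r_{\text{target}}+w(r_{\text{target}}-\hat r)$, together with $\dot{\hat r}=\dot r_{\text{ref}}+\dot{\tilde r}$, gives
\begin{align}
\dot e=\dot r_{\text{target}}-\dot{\hat r}=\dot r_{\text{target}}-\dot r_{\text{ref}}-\dot{\tilde r}=-w\,e-\dot{\tilde r}.
\end{align}
Thus the closed loop decomposes into the velocity-error subsystem analyzed in Section~\ref{sec:velocity_tracking} and a scalar (complex) linear position-error subsystem $\dot e=-w e-\dot{\tilde r}$ that is driven by the velocity error as an exogenous input.

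Second, I would verify that Theorem~\ref{theorem:varying2} still applies even though the reference \eqref{eq:reference_velocity_commend} is now generated in feedback from $\hat r$. The key observation is that its second derivative $\ddot r_{\text{ref}}=\ddot r_{\text{target}}+w(\dot r_{\text{target}}-\dot{\hat r})$ depends only on the target's (measured or estimated) velocity and acceleration and on the centroid velocity $\dot{\hat r}=\tfrac{1}{n}\sum_k v_k e^{i\theta_k}$, so it is available for computing the feedforward terms $h_k$ via \eqref{eq:h_k}. With those $h_k$, the velocity-error dynamics reduce \emph{exactly} to $\ddot{\tilde r}=-\tfrac{\gamma}{n}\sum_k i v_k e^{i\theta_k}\langle\dot{\tilde r},i v_k e^{i\theta_k}\rangle$ as in \eqref{eq:velocity_error_system2}, whence $V=\tfrac{1}{2}\|\dot{\tilde r}\|^2$ still satisfies $\dot V=-\tfrac{\gamma}{n}\sum_k\langle\dot{\tilde r},i v_k e^{i\theta_k}\rangle^2\le 0$. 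Crucially, this expression does not involve $e$, so the conclusion $\dot{\tilde r}(t)\to 0$ (for almost all initial headings) carries over verbatim from Theorem~\ref{theorem:varying2}; in other words the interconnection is a genuine \emph{cascade} — the velocity error drives the position error, but not conversely.

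Third, with $\dot{\tilde r}(t)\to 0$ in hand, I would close the argument by noting that $w>0$ makes $\dot e=-w e$ globally exponentially stable, and a vanishing additive perturbation of such a system yields a vanishing state: explicitly $e(t)=e^{-wt}e(0)-\int_0^t e^{-w(t-s)}\dot{\tilde r}(s)\,ds\to 0$ (the convolution of an integrable decaying kernel with a signal tending to zero tends to zero; equivalently one invokes input-to-state stability of the linear system). Hence $\hat r(t)\to r_{\text{target}}(t)$, i.e.\ the centroid asymptotically matches the target trajectory; moreover $\dot r_{\text{ref}}=\dot r_{\text{target}}+w e\to\dot r_{\text{target}}$ and $\dot{\hat r}=\dot r_{\text{ref}}+\dot{\tilde r}\to\dot r_{\text{target}}$, so the centroid velocity also matches the target velocity asymptotically.

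The step I expect to be the main obstacle is the second one: making precise that Theorem~\ref{theorem:varying2}, stated for an exogenously given time-varying reference, remains valid when the reference is closed-loop. The resolution rests entirely on the exact feedforward cancellation of $\ddot r_{\text{ref}}$ by the $h_k$, which removes $e$ from the velocity-error dynamics and restores the cascade structure used above. A secondary caveat worth flagging is that the feasibility conditions of Proposition~\ref{ass:velocity_condition} apply to $\|\dot r_{\text{ref}}\|=\|\dot r_{\text{target}}+w e\|$, which can be violated by a large transient offset $e$; strictly, the statement should be read as holding whenever the initial offset and/or the gain $w$ are small enough that $\|\dot r_{\text{ref}}(t)\|$ stays below $v_{\text{min}}$ along the trajectory — one of the performance trade-offs discussed later.
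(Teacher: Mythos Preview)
Your proposal is correct and follows essentially the same route as the paper: both introduce the position error, derive the linear ODE $\dot e=-we-(\text{velocity error})$, invoke Theorem~\ref{theorem:varying2} to get the velocity error vanishing, and conclude $e\to 0$ from the stable linear system driven by a vanishing input. Your treatment is in fact more careful than the paper's own proof, which simply cites Theorem~\ref{theorem:varying2} without discussing the feedback-generated reference or the Proposition~\ref{ass:velocity_condition} feasibility caveat that you correctly flag.
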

\begin{proof}
As proved in Theorem~\ref{theorem:varying2}, the designed reference velocity tracking controller \eqref{eq:controller_timevarying} guarantees an asymptotic convergence of the group centroid velocity to the reference velocity. Denote $\alpha(t) = \dot r_{\text{ref}} - \dot {\hat r}(t)$ as the tracking difference between the reference velocity and the centroid velocity. Then one has $\alpha(t) \rightarrow 0$ by  virtue of Theorem~\ref{theorem:varying2}. From \eqref{eq:reference_velocity_commend} we have the following equivalent equation
\begin{align}
\dot {\hat r}(t) - \dot r_{\text{target}} = -w(\hat r(t) - r_{\text{target}}) - \alpha(t)
\end{align}
Denote $\beta(t) = \hat r(t) - r_{\text{target}}$ as the trajectory tracking error between $\hat r(t)$ and $r_{\text{target}}$. Then one has $\dot \beta(t) = -w \beta(t) -\alpha(t)$. Since $w>0$ and $\alpha(t) \rightarrow 0$, one   obtains $\beta(t) \rightarrow 0$, which indicates that the trajectory tracking error converges to zero asymptotically.  
\end{proof}
In practice, one can design the weight $w$ as a distance-dependent function (i.e., $w(t) := w(\rho)$ where $\rho = \|r_{\text{ref}}(t) - r_{\text{target}}\|$) to adjust the convergence speed in different phases of the tracking process. Furthermore, since in the limit there holds $\dot  r_{\text{ref}} \rightarrow \dot r_{\text{target}}$, in order to ensure the condition in Proposition~\ref{ass:velocity_condition} is satisfied, we should also impose the  condition on the target velocity $v_{\text{min}}\geq \|v_{\text{target}}(t)\|$ to ensure a feasible tracking.

\subsection{Vehicle-target spacing control}
In this subsection, we design the spacing controller $u^{\text{spacing}}$ to ensure all vehicles stay with a bounded distance to the group centroid. As noted above, there exists a trade-off between the design of the velocity tracking controller and a spacing control. We will follow a similar idea as in \cite{kingston2007uav, klein2008coordinated} for the spacing controller design. We first present the following condition for an admissible spacing control that does not affect the performance of the centroid velocity tracking control. 

% The inter-vehicle spacing control aims to regulate vehicles' trajectories to stay close to the group centroid. We will follow a similar idea as in \cite{kingston2007uav, klein2008coordinated} for the spacing controller design.
% We first prove a lemma  to show  what types of spacing control will not disturb the centroid velocity tracking control.

\begin{lemma} \label{lemma:condition_spacing}
Denote the spacing control vector $u^{\text{spacing}} = [u_1^{\text{spacing}}, u_2^{\text{spacing}}, \cdots, u_n^{\text{spacing}}]^T$ and define the matrix $A$ as in \eqref{eq:design_hk}. 
The additional spacing control $u^{\text{spacing}}$  satisfying
\begin{align} \label{eq:condition_spacing}
u^{\text{spacing}} \in \text{ker}(A)
\end{align}
preserves the asymptotic tracking performance of the reference velocity by the group centroid (i.e., $\dot {\hat r} \rightarrow \dot r_{\text{ref}}$ as $t\rightarrow \infty$). 
 %That is, the control $u^{\text{spacing}}$ satisfying \eqref{eq:condition_spacing} will not affect the control performance in the velocity tracking task.
\end{lemma}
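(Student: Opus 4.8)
The plan is to observe that a steering perturbation lying in $\ker(A)$ is exactly an input that is \emph{invisible to the second-order dynamics of the centroid}, so that it leaves the velocity-tracking error dynamics analysed in Theorem~\ref{theorem:varying2} untouched and the conclusion $\dot{\hat r}\to\dot r_{\text{ref}}$ carries over unchanged.

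Concretely, I would first differentiate the centroid velocity under the combined law \eqref{eq:controller_combined}:
\[
\ddot{\hat r} \;=\; \frac1n\sum_{k=1}^n i v_k e^{i\theta_k}\dot\theta_k \;=\; \frac1n\sum_{k=1}^n i v_k e^{i\theta_k}\,u_k^{\text{velocity}} \;+\; \frac1n\sum_{k=1}^n i v_k e^{i\theta_k}\,u_k^{\text{spacing}}.
\]
The point is that, after identifying a complex number with the pair of its real and imaginary parts, the linear map $h\mapsto \frac1n\sum_k i v_k e^{i\theta_k} h_k$ is precisely the matrix $A$ of \eqref{eq:design_hk}; hence the last sum equals $A\,u^{\text{spacing}}$, which vanishes by \eqref{eq:condition_spacing}. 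Consequently $\ddot{\hat r}$ is \emph{identical} to the expression used in the proof of Theorem~\ref{theorem:varying2}, and the velocity tracking error $\dot{\tilde r}:=\dot{\hat r}-\dot r_{\text{ref}}$ still obeys the closed-loop equation \eqref{eq:velocity_error_system2}.

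With that established, I would reuse the Lyapunov argument of Theorem~\ref{theorem:varying2} verbatim: for $V=\frac12\|\dot{\tilde r}\|^2$ one again obtains $\dot V = -\frac{\gamma}{n}\sum_k\langle\dot{\tilde r},\,i v_k e^{i\theta_k}\rangle^2\le 0$, and the invariance-like principle for non-autonomous systems (\cite[Theorem~8.4]{khalil2002nonlinear}) together with the critical-point/Hessian analysis of the undesired equilibria gives $\dot{\tilde r}\to 0$, i.e. $\dot{\hat r}\to\dot r_{\text{ref}}$ as $t\to\infty$.

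The step I expect to need the most care is justifying that switching on $u^{\text{spacing}}$ does not spoil the hypotheses of that invariance-like argument: once the spacing term is present, the heading flow $\dot\theta=u^{\text{velocity}}+u^{\text{spacing}}$ is no longer the gradient flow of Theorems~\ref{theorem:constant}--\ref{theorem:varying2}, and the closed loop also drags along the position state $r$ through $u^{\text{spacing}}(r)$. One therefore has to check that $u^{\text{spacing}}$ — and hence $\ddot{\tilde r}$ and $\frac{d}{dt}\dot V$ — stays bounded, so that $\dot V$ is uniformly continuous and a Barbalat-type conclusion still forces $\langle\dot{\tilde r},\,i v_k e^{i\theta_k}\rangle\to 0$ for every $k$, and that trajectories are still repelled from the locus $\dot{\tilde r}\neq 0$. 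Boundedness of $u^{\text{spacing}}$ is delivered by the (saturated) spacing controller constructed in the next subsection, so this reduces to exactly the equilibrium analysis already carried out; the essential content of the lemma is the algebraic fact that $\ker(A)$ is the null space of the map taking steering inputs to centroid accelerations, which is what decouples the two objectives in \eqref{eq:controller_combined}.
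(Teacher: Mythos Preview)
Your proposal is correct and follows essentially the same route as the paper: both observe that the condition $u^{\text{spacing}}\in\ker(A)$ is precisely $\sum_k i v_k e^{i\theta_k}u_k^{\text{spacing}}=0$, so the spacing term drops out of $\ddot{\hat r}$ (equivalently, out of $\dot V$), and the Lyapunov analysis of Theorem~\ref{theorem:varying2} carries over unchanged. Your discussion of the regularity hypotheses needed for the invariance-like/Barbalat step is in fact more careful than the paper's proof, which simply defers to ``similar arguments''; one small caveat is that the concrete spacing law \eqref{eq:special_spacing} in the next subsection is \emph{not} in $\ker(A)$, so you should not point to it as the source of boundedness for a $\ker(A)$-constrained $u^{\text{spacing}}$.
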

 
\begin{proof}
We  consider the same Lyapunov function as used in Section~\ref{sec:velocity_tracking}. The controller $u^{\text{velocity}}$ designed in \eqref{eq:controller_timevarying} is used here as an example for the velocity tracking analysis.
The time derivative of the Lyapunov function along the solution of the system \eqref{eq:vehicle_model} with the combined controller \eqref{eq:controller_combined} can be calculated as
\begin{align} \label{eq:lya_combine}
\dot V = & \left \langle \dot {\hat r} - \dot r_{\text{ref}}, \ddot {\hat r}  - \ddot r_{\text{ref}}  \right \rangle \nonumber  \\
 = & \left \langle \dot {\hat r} - \dot r_{\text{ref}}, \frac{1}{n}\sum_{k=1}^n i v_k e^{i \theta_k}  u_k^{\text{velocity}}   - \ddot r_{\text{ref}} \right \rangle \nonumber  \\
& + \left \langle \dot {\hat r} - \dot r_{\text{ref}}, \frac{1}{n} \underbrace{\sum_{k=1}^n i v_k e^{i \theta_k}  u_k^{\text{spacing}}}_{=0}  \right \rangle \nonumber  \\
 =& \left \langle \dot {\hat r} - \dot r_{\text{ref}}, \frac{1}{n} \sum_{k=1}^n i v_k e^{i \theta_k}  u_k^{\text{velocity}}   - \ddot r_{\text{ref}} \right \rangle  \leq 0
\end{align}
Note that the equality in the third line of the above \eqref{eq:lya_combine} is due to the condition in \eqref{eq:condition_spacing}. Therefore, $\dot V$ is invariant for any control $u^{\text{velocity}}$ of \eqref{eq:condition_spacing}, and is negative semidefinite according to the controller property. The remaining analysis is similar to the proof of previous theorems in  Section~\ref{sec:velocity_tracking} and is omitted here.
\end{proof}

For a vehicle group with $n > 2$ vehicles with non-aligned headings, the matrix $A$ is of full row rank and so has a null space of dimension $n-2$, which leaves motion freedoms for designing the spacing controller. However, it is challenging to design an admissible control input that lies in $\text{ker}(A)$ while also keeping all vehicles within a reasonable spacing around the centroid. In particular, an analytical solution for an admissible spacing control  $u^{\text{spacing}}$ is not available, while a numerical solution is usually expensive. 
Actually, even for the coordination control of \textit{identical unit-speed} vehicles, it is still an open problem to design explicit controllers to satisfy the above constraint and design requirement (see more in-depth discussions in  \cite[Chapter~2]{klein2008coordinated}).

Inspired by \cite{Pongpunwattana2007} and \cite{paley2004collective}, we  consider an alternative approach based on the  \emph{beacon control law} proposed by Paley et al. \cite{paley2004collective, paley2007oscillator} to design an intuitive spacing control. The idea is to allow each vehicle to perform limited circular trajectories around  a chosen beacon point in the reference trajectory path. In this way, the  spacing control takes a position feedback from  a reference trajectory, and is designed as
\begin{equation} \label{eq:special_spacing}
u_k^{\text{spacing}} = -\left(\omega_0 + \gamma \omega_0 \left \langle r_k - r_{\text{ref}},   v_k e^{i \theta_k} \right \rangle \right)
\end{equation}
where $\omega_o$ and $\gamma$ are positive parameters for adjusting the period and turning rate in  the circular motion: smaller 
$\omega_o$ and $\gamma$ can produce smaller magnitude for $u_k^{\text{spacing}}$ leading to smaller turning rate.
It has been proved in \cite{paley2004collective, paley2007oscillator,2017Sun_circular_formation} that the above control (in the absence of velocity tracking control) guarantees constant-speed vehicles performing circular motions around a reference point $r_{\text{ref}}$. It has also been shown in \cite{Pongpunwattana2007} by using simulation examples that the above spacing control will ensure that all the vehicles move and remain close to the centroid. We note that this control is generally not an admissible one in the null space of $A$ satisfying the condition in Lemma~\ref{lemma:condition_spacing} and therefore a perfect velocity tracking performance is not guaranteed by the addition of the spacing controller.  However, as demonstrated by numerous numerical simulations and experiments in the next sections, such a spacing control law can ensure all agents stay close to the group centroid while the group centroid position tracks the target trajectory.
\footnote{We refer the readers to \cite{sun2015collective} for several typical simulation examples that demonstrate the effectiveness and performance of the proposed tracking controllers for a group of fixed-wing UAVs with constant but non-identical airspeeds. 
In all numerical simulations, the proposed tracking controllers guarantee a desirable tracking performance through the group centroid and bounded tracking distance errors for a group of  constant-speed vehicles to collaboratively track a moving  target  (with either a constant velocity or a time-varying velocity). }

% {\color{blue} We will also use simulations, as in \cite{Pongpunwattana2007}, to show the controller performance in the next section...To be added: to give boundedness analysis of the above control law (probably some explicit bounds relating to the SVD of the matrix $A$. Another insight is to swap the order of the control law design, i.e., first design the spacing control (so that it has sufficient degree of freedom), and then design the velocity matching control (to be constrained in the feasibility space of the spacing control law)...}

%\begin{remark}
An alternative solution with a small number of UAVs is to consider projecting (\ref{eq:special_spacing}) onto the non-zero kernel of $A$. The philosophy of designing $u^{\text{spacing}}$ without affecting the centroid velocity tracking task  is inspired by \cite{klein2006controlled, antonelli2009experiments}, and is actually in the broad framework of null-space-based (NSB) robotic behavior control  proposed in \cite{antonelli2009experiments, arrichiello2010null,sadeghian2014task}. In the framework of NSB approach, each sub-task is assigned   a certain priority and the control term for a sub-task with a lower priority should  live in the null space of the control task space of those with higher priorities. However, the NSB approach is often computationally expensive and unscalable; in particular, the matrix $A$ is state-dependent, and there is no analytical formula for expressions
of its null spaces (neither by SVD nor by other projection techniques), while  even an approximation of its null space in real time
will soon become intractable when $A$ grows with sizes. Hence, this NSB approach is not suitable for real-time tracking control implemented on micro-processors in autonomous UAV systems with limited computational resources. Indeed, this approach  will quickly become infeasible  when the  number of UAVs increases. We also remark that in the two-step controller design we do not assign any priority in each sub-task.   Since an  analytical and perfect solution  for both subtasks in the target tracking control is hard to find as shown in \cite{klein2008coordinated},  one may consider ad-hoc solutions (see e.g. \cite{van2008non}) by taking into account different waypoints in the target trajectory to be used as feedback information and designing a \textit{switching} tracking controller to ensure tracking convergence and boundedness. However, in this way analytical convergence results are hard to obtain.  

\section{Experimental verification}  \label{sec:experiment}
The purpose of this section is to validate the proposed algorithms with fixed-wing aircraft in a series of experiments. This validation is not a mere extension of the theoretical work. In particular, we have reconsidered some assumptions in the theoretical analysis that are no longer satisfied in a real distributed control system. For example, several issues exist in practice such as the presence of delays in the transmission of information, the non-synchronization of clocks, or embedded sensors on different vehicles that are biased with respect to each other. Some of these issues may potentially have a significant impact on the performance of the overall system, especially in a decentralized control setting \cite{mou2016undirected,de2015controlling}. On top of these non-modelled effects, the fixed-wing UAV dynamics, of course, are not perfect unicycles. Therefore, one of the goals of this section is to validate the performance of the proposed algorithms in practice even when some important factors have been omitted in the development of theoretical analysis.

The experimental setup consists of one Parrot Bebop2 rotorcraft (serving as a target to be tracked) and three fixed-wing aircraft labeled as Wing 2, 3 and 4, respectively. All the vehicles are equipped with the open-source autopilot \textit{Paparazzi}, in particular, with the \emph{Apogee} autopilot\footnote{For more information, see \url{https://wiki.paparazziuav.org}, and \url{https://wiki.paparazziuav.org/wiki/Apogee/v1.00}}, which allows a rapid prototyping for distributed aerial systems as shown in our recent works \cite{2017Sun_circular_formation,2017iroscircular}. This platform also enables third parties to quickly implement and use for other purposes our proposed algorithms since it does not require any
special feature from the hardware \footnote{Firmware's source code is available at \url{https://github.com/paparazzi/paparazzi/tree/master/sw/airborne/modules/multi/ctc}, and the wiki entry for its set up \url{https://wiki.paparazziuav.org/wiki/Module/collective\_tracking\_control}}.

\begin{figure}
\centering
\includegraphics[width=1\columnwidth]{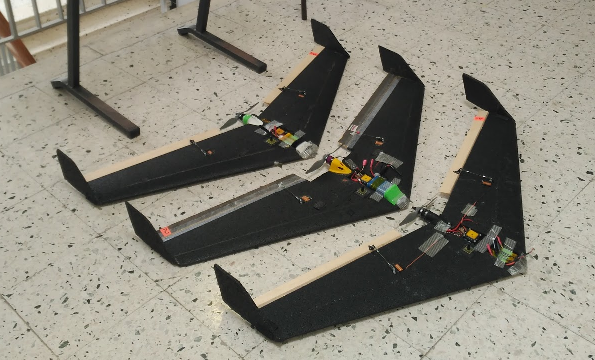}
	\caption{The three fixed-wing UAVs employed in the experiments powered by the open-source autopilot \textit{Paparazzi}.}
\label{fig: startmis}
\end{figure}

We choose the scenario of target trajectory tracking described in Section IV.A since it covers most of the presented results in this paper. In particular, we assign a rotor-craft  as an independent target that flies at the ground speed of $2$ m/s. The three aircraft fly at constant speeds between $10$ m/s and $16$ m/s while they execute onboard their control actions (\ref{eq:controller_timevarying}) with $\dot r_{\text{ref}}$ given by (\ref{eq:reference_velocity_commend}) and a positive distance-dependent weight $w(\rho) = \frac{1}{\rho}(1 - e^{-0.1\rho})$ (we remind that $\rho = \|r_{\text{ref}}(t) - r_{\text{target}}(t)\|$), and the spacing controller (\ref{eq:special_spacing}) with $\omega_0 = 0.25$ rads/sec. The chosen gains for (\ref{eq:controller_timevarying}) and (\ref{eq:reference_velocity_commend}) were $0.001$ for both cases. The relative positions between the vehicles are calculated onboard by having the UAVs broadcasting their absolute positions obtained by a GPS. {We employ \emph{X-Bee} modules, which create a network among them. This network supports air-to-air communication without ground intervention, and therefore a ground central station is not required. This feature is already implemented and available in the \emph{Paparazzi} project}. The broadcasting frequency for the target is 5Hz, whereas for the aircraft it is 10Hz. Indeed, the fact that there exist communication losses (i.e., communication dropouts over short intervals) between aircraft, and that the aircraft process the information at different times, leads to possible discrepancies among the three vehicles about the relative positions.

%\begin{figure}
%\centering
%\includegraphics[width=1\columnwidth]{startmis.eps}
%\caption{The aircraft track a standby circle (in black color) before starting the collaborative tracking mission. The blue, pink and orange aircraft fly at an airspeed of $10$ m/s, $12$ m/s and $16$ m/s, respectively. The black cross represents the rotorcraft target on the ground initially as a stationary target. The algorithm almost converges within two minutes, having the centroid of the team orbiting around the target. The colored solid lines show the last two minutes of the aircraft trajectories before the target takes off and starts moving. The origin of this plot represents the position of the ground control station, next to the runaway.}
%\label{fig: startmis}
%\end{figure}

\begin{figure}
\centering
\includegraphics[width=1\columnwidth]{./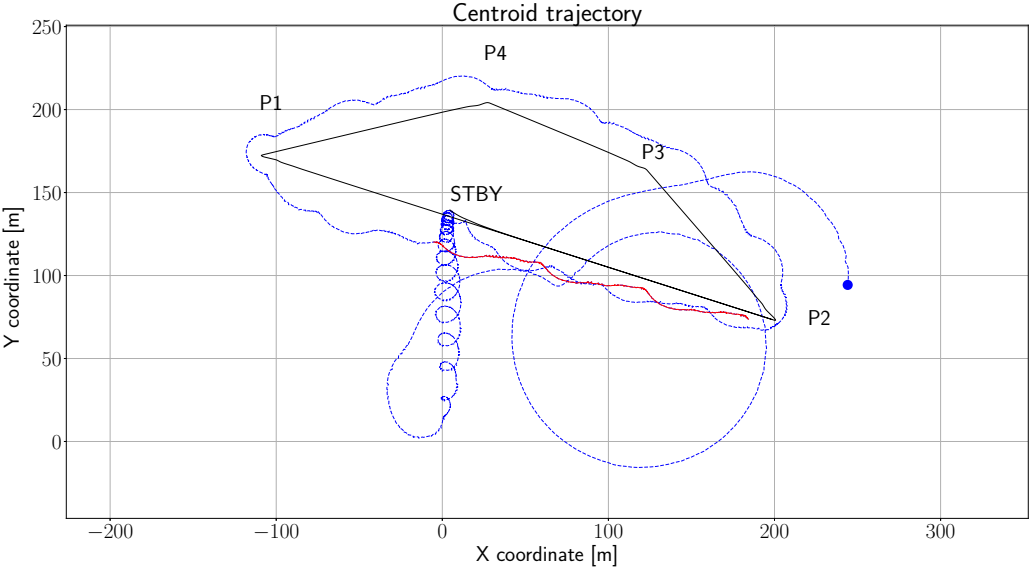}
	\caption{Tracking performance of the centroid trajectory. The blue  trajectory shows the  centroid trajectory, where the dot denotes   the starting point. The red trajectory shows the last two minutes of the mission. The black trajectory is the target trajectory once it starts moving from STBY.}
\label{fig: startcen}
\end{figure}

\begin{figure}
\centering
\includegraphics[width=1\columnwidth]{./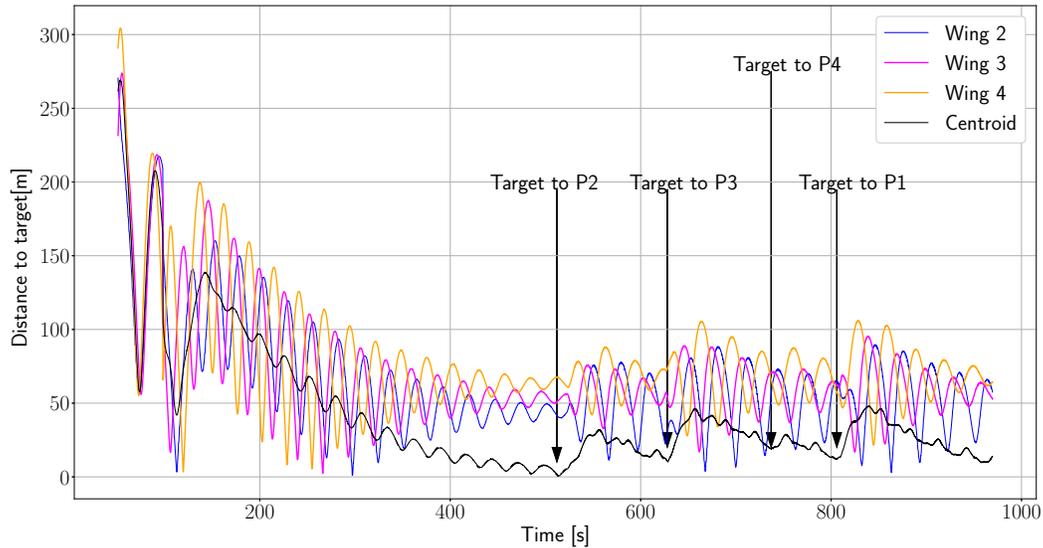}
	\caption{Time evolution of the relevant distances during the experiment. Before the target moves to P2, it remains stopped at the STBY waypoint.}
\label{fig: distances}
\end{figure}

\begin{figure*}
\centering
\begin{subfigure}{.48\columnwidth}
  \centering
  \includegraphics[width=\linewidth]{./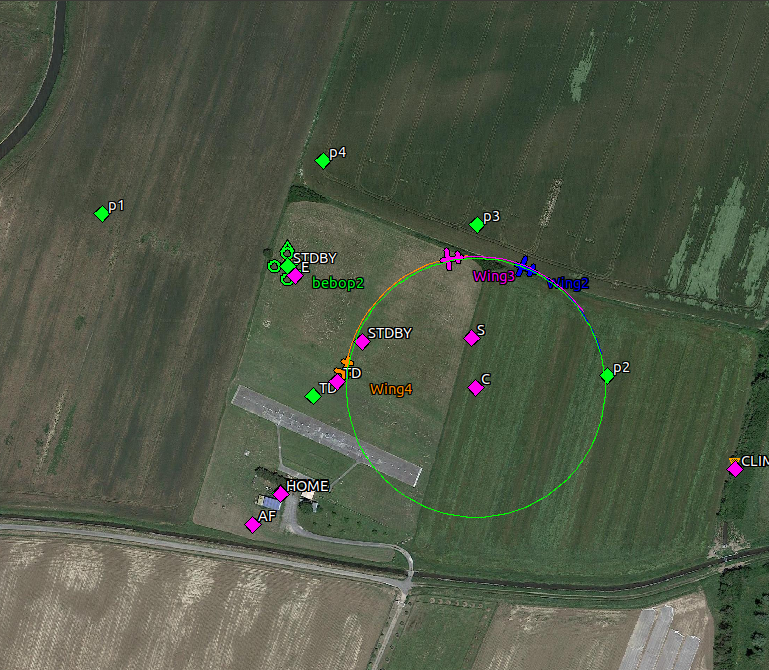}
  \caption{t = $90$secs}
\end{subfigure}
%\hspace{.5pt}
\begin{subfigure}{.50\columnwidth}
  \centering
  \includegraphics[width=\linewidth]{./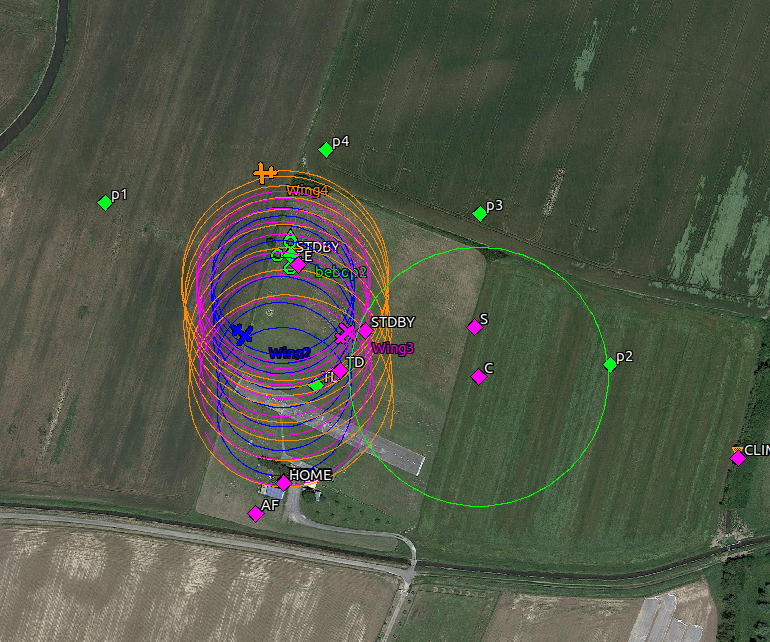}
   \caption{t = $420$secs}
\end{subfigure}
%\\
\begin{subfigure}{.49\columnwidth}
  \centering
  \includegraphics[width=\linewidth]{./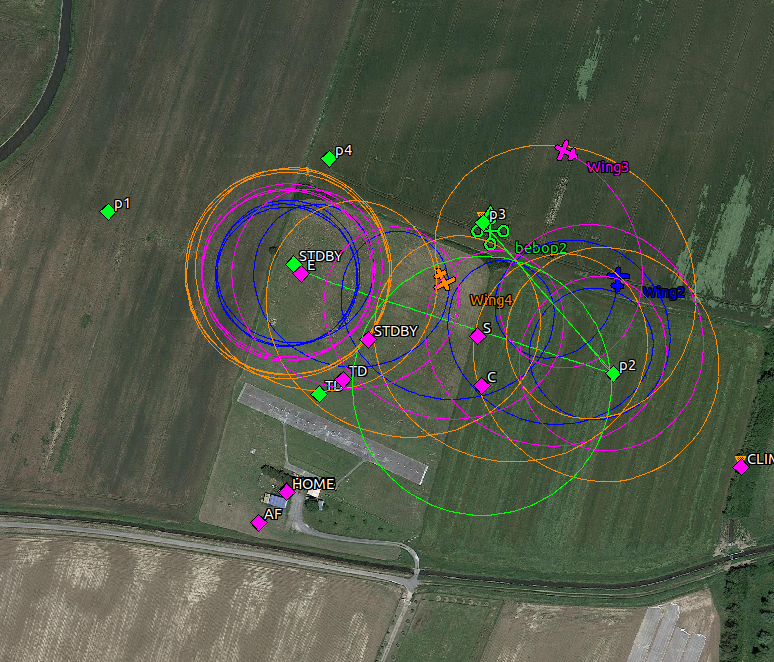}
   \caption{t = $685$secs}
\end{subfigure}
%\hspace{.5pt}
\begin{subfigure}{.49\columnwidth}
  \centering
  \includegraphics[width=\linewidth]{./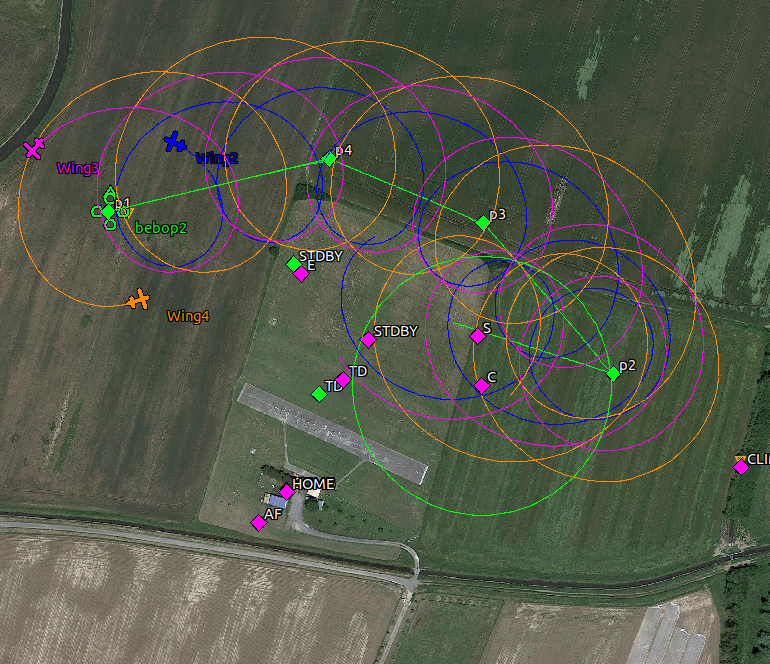}
	\caption{t = $809$secs} 
\end{subfigure}%
	\caption{Screenshots from the \emph{Paparazzi} ground control station during the mission. Note how in (a) and (b) the target is stopped at STBY, and afterward it starts moving. It can be noted how the aircraft enclose the target while the centroid of the team is close to it.}
\label{fig: gs}
\end{figure*}

{The experiment was performed in a radio-control club in Muret, a city close to Toulouse in France. The wind velocity had an average speed of 3 m/s and an almost  constant direction. Note that this wind speed has a noticeable impact on the ground speed of the aircraft. Nevertheless, as the experimental results indicate, such a wind speed does not have an impact on the intended performance of the algorithm. We consider $\theta$ as the heading angle (vector velocity), and not the attitude \emph{yaw} angle. If there is no wind, both angles are the same in our setup. In practice, when we consider the heading instead of the yaw for the unicycle model, the aircraft ends up compensating the lateral wind by \emph{crabbing} so that aerodynamic angle \emph{sideslip} is almost zero\footnote{Crabbing happens when the inertial velocity makes an angle with the nose heading due to wind. \emph{Slipping} happens when the aerodynamic velocity vector makes an angle (sideslip) with the body ZX plane. Slipping is (almost) always undesirable, because it degrades aerodynamic performance. Crabbing is not an issue for the aircraft.}.}

We divide the mission into two stages. In the first stage, we place the rotorcraft on the ground at the fixed point STBY. Then we launch the three aircraft to orbit around the waypoint C. We denote in green color the starting circular trajectory in Figure \ref{fig: gs}a. At time $t=90$ secs, the three aircraft start the algorithm from their \emph{stand-by circle}. In Figure \ref{fig: startcen}, we plot the position of the centroid of the team in blue color, where the blue dot (at the right side on the plot) is its initial position at $t=90$ secs. With our setup and initial conditions, it takes around seven minutes until the centroid of the team converges to the fixed starting position of the target (see Figures \ref{fig: distances} and \ref{fig: gs}b). At the end of this first stage, as shown in Figure \ref{fig: gs}b, the STBY point (or green rotorcraft) is at the barycenter of the triangle described by the three aircraft.

In the second stage, after the convergence to the target's starting point, the target starts flying following a closed path with a speed of $2$m/s. The target's trajectory can be checked in Figure \ref{fig: startcen} in black color, or the green straight lines in Figure \ref{fig: gs}. The target changes its velocity (but not its speed) every time it travels to a different waypoint. Then, the algorithm successfully drives the centroid of the team to converge again to the target's moving position as illustrated in Figure \ref{fig: distances}. This fact can be noticed by checking that the rotorcraft in Figure \ref{fig: gs}d is at the barycenter of the triangle described by the three aircraft.

During the experiments, the rotorcraft broadcasts its position and velocity, since we assumed that the target cooperates as   might be the case in, for example, a sports event (e.g., the cyclist tracking problem illustrated in Fig.~\ref{fig: tour}). Nevertheless, such broadcasting can be replaced by onboard sensing to collect the relative position of the aircraft with respect to the target. Indeed, the estimation of the target's velocity might be done numerically from the sensed relative positions. Also, it can be more sophisticated with the addition of estimators onboard driven by the error signal between the group's centroid and the target. In fact, in case of a wrong reading of the target's velocity, such an error signal would converge to a non-zero constant. We expect that small estimation errors can be tolerated. In fact, with the spacing control law, a perfect tracking is often not possible due to the constant-speed constraint, while one can only expect an approximate yet still satisfactory tracking in practice.

\begin{remark}
Since we are employing an \emph{all-to-all communication} network, one would be interested in assessing the scalability of the network in terms of communication budget or needed bandwidth by the vehicles. In our experiments, the vehicles share four floating numbers (two floats for their 2D positions, and two floats for their 2D velocity) at $10$Hz frequency, which is $1.280$kbps. If there were 100 flying aircraft, an individual vehicle would need \emph{to attend} $128$kbps. The employed  radio modem (each with 25 USD dollars) in our experiments can handle up to $250$kbps.
\end{remark}

\begin{remark}Indeed, we consider that there is no limit in the aircraft's turning rate in our proposed control actions. However,  the heading control law  establishes bounded turning rates that depend on the initial conditions of the system, and on the chosen $\omega_0$ and the gain $\gamma$ in the different proposed control actions. In practice, one can always adjust the values of the two key parameters $\omega_0$ and   $\gamma$
to tune the magnitude of the heading control input $u_k$ so as to change the turning rate that meets practical
requirements on turning radius.  We provide a quick simulator\footnote{\url{https://github.com/paparazzi/paparazzi/blob/master/sw/ground_segment/python/multi/collective_tracking_control/ctc_simulation.py}} linked to the autopilot realistic values in order to assess the order of magnitude of the demanded turning rate during a whole mission. Therefore, one can see beforehand an educated guess of the demanded turning rates for all autonomous UAVs in a  collective tracking operation.
\end{remark}

\section{Conclusion}  \label{sec:conclusion}
Fixed-wing UAVs have found increasing applications in both civilian and defense fields in recent years. Compared with rotary-wing UAVs, fixed-wing UAVs in general feature significantly larger flight ranges and longer flight duration, which are more suitable for autonomous tasks such as surveillance, circumnavigation and tracking. However, a key challenge in fixed-wing UAV coordination and control is the airspeed constraint associated with maintaining stable flight.  In this paper, we investigate  the possibility and applicability of using multiple fixed-wing UAVs with constant and possibly non-identical speeds to conduct collaborative tracking of a moving target.
 Inspired by   previous papers (e.g. \cite{sepulchre2008stabilization, klein2006controlled}), a systematic framework is proposed for the collaborative target-tracking control.  We have used the group centroid as representative of the overall UAV group in the tracking process. The design of the tracking controller consists of two parts: the reference velocity tracking control that regulates the group centroid to track a reference velocity, and a spacing controller that ensures all vehicles keep close to the group centroid. The reference velocity involves the target velocity as well as the relative position to the target as feedback to ensure the group centroid tracks the target trajectory. We have also discussed the trade-offs and limitations of using fixed-wing UAVs to track a moving target, and conditions that ensure a feasible tracking. Despite the strict constraint  of differing constant speeds, we have provided a positive solution showing  that fixed-wing UAVs are applicable for performing autonomous target tracking tasks as a group, which is  supported by both numerical simulations and real-life experiments involving a group of three fixed-wing UAVs, subject to limited wind disturbance. In the future research, we will further investigate the effect of wind flows on fixed-wing UAV coordination in general collaborative control tasks.

\section*{ACKNOWLEDGMENT}
The authors would like to thank Georg S. Seyboth for his contribution in the early stage ideas of this paper. The experimental work was supported by the \emph{Paparazzi team} in the drone lab at the Ecole Nationale de l'Aviation Civile (ENAC) in Toulouse, France. In particular, we thank Xavier Paris, Murat Bronz and Gautier Hattenberger. The authors would like to thank the reviewers and the associate editor Dr. Paley for their constructive critics.

\section*{Funding Sources}
This work is supported by the Australian Research Council Discovery Project  DP-160104500 and   DP-190100887, Data61-CSIRO, the EU H2020 Mistrale project under grant agreement no. 641606, and partially supported by the Spanish Ministry of Science and Innovation under research Grant RTI2018-098962-B-C21.

The work of Hector Garcia de Marina is supported by the grant \emph{Atraccion de Talento} with reference number 2019-T2/TIC-13503 from the Government of the Autonomous Community of Madrid.

\bibliography{Unicycle_tracking}

\end{document}